\title[Ising model on a regular matroid ]{A polynomial-time algorithm for estimating the partition
function of the ferromagnetic Ising model on a regular matroid
  }
\author{Leslie Ann Goldberg}
\author{Mark Jerrum}
\address{Department of Computer Science\\
University of Liverpool\\Ashton Building\\
Liverpool L69 3BX\\
United Kingdom.}
\address{School of Mathematical Sciences\\
Queen Mary, University of London\\
Mile End Road\\
London E1 NS\\
United Kingdom.}
\thanks{The work described in this paper was partly supported by EPSRC Research Grant
(refs\ EP/I011528/1 \& EP/I011935/1) ``Computational Counting''.
 An extended abstract appeared in the proceedings of ICALP 2011.
}
\def\easycompute{\zeta} 
\def\smallconst{\rho}
\def\newbigconst{66}
\def\calM{\mathcal{M}}
\def\Ztilde{\widetilde{Z}}
\def\R{\mathbb{R}}
\def\N{\mathbb{N}}
\def\Zow#1#2{\Ztilde(#1/p_1\backslash p_2,p_3;#2)}
\def\Ztw#1#2{\Ztilde(#1/p_2\backslash p_1,p_3;#2)}
\def\Zthw#1#2{\Ztilde(#1/p_3\backslash p_1,p_2;#2)}
\def\Zdw#1#2{\Ztilde(#1\backslash T;#2)}
\def\Zcw#1#2{\Ztilde(#1/T;#2)}
\def\Zjw#1#2{\Ztilde(#1/p_j\backslash (T-p_j);#2)}
\def\cor{c}
\def\zhat{\hat z}
\def\tr{\mathrm{T}}
\def\rest{|}
\def\pred{\varphi}
\def\Ex{\mathop\mathrm{\null E}\nolimits}
\def\deltasum{\bigtriangleup}
\def\rhs{right-hand side}
\def\lhs{left-hand side}
\def\bgamma{\boldsymbol\gamma}
\def\calC{\mathcal{C}}
\def\tTr{{T_\mathrm{decomp}}}
\def\tJS{{T_\mathrm{base}}}
\def\aTr{{\alpha_\mathrm{decomp}}}
\def\aJS{{\alpha_\mathrm{base}}}
\def\Itwo{\mathcal{I}_2}
\def\Ithree{\mathcal{I}_3}
\def\bdelta{\boldsymbol{\delta}}
\def\ZPotts{Z_\mathrm{Potts}}
\let\epsilon=\varepsilon
\let\rho=\varrho
\def\setminus{-}
\newtheorem{theorem}{Theorem}
\newtheorem{lemma}[theorem]{Lemma}
\newtheorem{corollary}[theorem]{Corollary}
\newtheorem{observation}[theorem]{Observation}
\newtheorem{remark}{Remark}
\newtheorem{definition}[theorem]{Definition}
\begin{document}

\maketitle

\begin{abstract}
We investigate the computational difficulty of approximating the partition function of the ferromagnetic Ising model on
a regular matroid. Jerrum and Sinclair have shown that there is a fully polynomial randomised approximation scheme (FPRAS)
for the class of graphic matroids. On the other hand, the authors have previously shown, subject to a complexity-theoretic
assumption, that there is no FPRAS for the class of binary matroids, which is a proper superset of the class of graphic matroids. 
In order to  map out the region where approximation is
feasible, we focus on the class of regular matroids, an important class of matroids
which properly includes the class of graphic matroids, and is properly included
in the class of binary matroids. Using Seymour's decomposition theorem, we give an FPRAS for the class of regular matroids.
\end{abstract}




\section{Introduction}

Classically, the Potts model~\cite{Potts} in statistical physics is defined on a graph.
Let $q$ be a positive integer and $G=(V,E)$ a graph with edge weights 
$\bgamma=\{\gamma_e:e\in E\}$;  the 
weight $\gamma_e>-1$ represents a ``strength of interaction'' along edge~$e$.  
The $q$-state Potts partition function specified by this weighted graph is
\begin{equation}\label{eq:PottsGph}
\ZPotts(G;q,\bgamma) = 
\sum_{\sigma:V\rightarrow [q]}
\prod_{e=\{u,v\}\in E}
\big(1+\gamma_e\,\delta(\sigma(u) ,\sigma(v))\big),
\end{equation}
where $[q]=\{1,\ldots,q\}$ is a set of $q$~spins or colours,
and $\delta(s,s')$ is~$1$ if $s=s'$, and 0 otherwise.
The partition function is a sum over ``configurations''~$\sigma$
which assign spins to vertices in all possible ways.
We are concerned with the computational complexity of approximately
evaluating the partition function~(\ref{eq:PottsGph}) and generalisations
of it.  For reasons that will become apparent shortly, we shall be concentrating on
the case $q=2$, which is the familiar Ising model.
In this special case, the two spins correspond to two possible magnetisations
at a vertex (or ``site''), and the edges (or ``bonds'') model interactions 
between sites.  In the {\it ferromagnetic\/} case, when $\gamma_e>0$, for all $e\in E$, 
the configurations $\sigma$ with many adjacent like spins
make a greater contribution to the partition function $\ZPotts(G;q,\bgamma)$ than
those with few;  in the {\it antiferromagnetic\/}
case, when  
$-1<\gamma_e<0$, 
the opposite is the case.

An equivalent way of looking at (\ref{eq:PottsGph}) is 
as a restriction of the (multivariate) 
Tutte polynomial, which is defined as follows:
\begin{equation}\label{eq:TutteGph}
\Ztilde(G;q,\bgamma)=\sum_{A\subseteq E}\gamma_A\,q^{\kappa(A)-|V|},
\end{equation}
where $\gamma_A=\prod_{e\in A}\gamma_e$ and $\kappa(A)$ 
denotes the number of connected components in
the graph $(V,A)$~\cite[(1.2)]{SokalMulti}.
This is perhaps not the most usual expression for the multivariate
Tutte polynomial of a graph, but it conveniently generalises to 
the Tutte polynomial of a matroid~(\ref{eq:TutteDef}), also \cite[(1.3)]{SokalMulti}, 
which is the main subject of this article.
Although (\ref{eq:TutteGph}) and (\ref{eq:PottsGph}) are
formally quite different, they agree when $q$ is a positive integer,
up to a factor of~$q^{-|V|}$. 

Jaeger, Vertigan and Welsh~\cite{JVW90} were the first to consider the
computational complexity of computing the Tutte polynomial.
They considered the classical bivariate Tutte polynomial in which 
the edge weights are constant, i.e., $\gamma_e=\gamma$ for all $e\in E$.
Their approach was to fix $q$ and~$\gamma$, and consider the computational
complexity of computing~(\ref{eq:TutteGph}) as a function of the instance
graph~$G$.  Jaeger et al.\ showed, amongst other things, that computing the
Tutte polynomial exactly is \#P-hard when $q>1$ and $\gamma\in(-1,\infty)-\{0\}$.
In particular, this means that 
the partition function~(\ref{eq:PottsGph}) of the Potts model
is computationally intractable, unless $\mathrm{P}=\mathrm{\#P}$.

In the light of this intractability result, it is natural to consider the 
complexity of approximate computation in the sense of ``fully polynomial
approximation scheme'' or FPRAS\null.
Before stating the known results, we quickly define the relevant concepts.
A \emph{randomised approximation scheme\/} is an algorithm for
approximately computing the value of a function~$f:\Sigma^*\rightarrow
\mathbb{R}$.
The
approximation scheme has a parameter~$\epsilon>0$ which specifies
the error tolerance.
A \emph{randomised approximation scheme\/} for~$f$ is a
randomised algorithm that takes as input an instance $ x\in
\Sigma^{* }$ (e.g.,  for the problem of
computing the bivariate Tutte polynomial of a graph, the
graph~$G$) and a rational error
tolerance $\epsilon >0$, and outputs a rational number $z$
(a random variable of the ``coin tosses'' made by the algorithm)
such that, for every instance~$x$,
\begin{equation}
\label{eq:3:FPRASerrorprob}
\Pr \big[e^{-\epsilon} f(x)\leq z \leq e^\epsilon f(x)\big]\geq \frac{3}{4}\, .
\end{equation}
The randomised approximation scheme is said to be a
\emph{fully polynomial randomised approximation scheme},
or \emph{FPRAS},
if it runs in time bounded by a polynomial
in $ |x| $ and $ \epsilon^{-1} $.
Note that the quantity $3/4$ in
Equation~(\ref{eq:3:FPRASerrorprob})
could be changed to any value in the open
interval $(\frac12,1)$ without changing the set of problems
that have randomised approximation schemes~\cite[Lemma~6.1]{JVV86}.

For the problem of computing the bivariate Tutte polynomial
in the antiferromagnetic situation, i.e., $-1<\gamma<0$, there is no
FPRAS for~(\ref{eq:TutteGph}) unless $\mathrm{RP}=\mathrm{NP}$~\cite{tuttepaper}.
This perhaps does not come as a great surprise, since, in the special case 
$\gamma=-1$, the equivalent expression~(\ref{eq:PottsGph}) counts proper colourings of a graph.
So we are led to consider the ferromagnetic case, $\gamma>0$.  Even here,
Goldberg and Jerrum~\cite{ferropotts} have recently provided evidence of computational intractability
(under a complexity theoretic assumption that is stronger than 
$\mathrm{RP}\not=\mathrm{NP}$) when $q>2$. 

The sequence of results so far described suggest we should focus on the 
special case $q=2$ and $\gamma>0$, i.e., the ferromagnetic Ising model.
Here, at last, there is a  positive result to report, as 
Jerrum and Sinclair~\cite{JSIsing} have presented an FPRAS for the partition
function~(\ref{eq:TutteGph}), with $q=2$ and arbitrary positive weights~$\bgamma$.
As hinted earlier, the Tutte polynomial makes perfect sense in the much wider 
context of an arbitrary matroid (see Sections \ref{sec:matroid} and~\ref{sec:decomp}
for a quick survey of matroid basics, and of the Tutte polynomial in the context
of matroids).  
The Tutte polynomial of a graph is merely the special case where the matroid is
restricted to be graphic.  It is natural to ask whether the positive result of~\cite{JSIsing}
extends to a wider class of matroids than graphic.  One extension of graphic
matroids is to the class of binary matroids.  Goldberg and Jerrum~\cite{GJbinary} recently 
provided evidence of computational intractability of the ferromagnetic Ising
model on binary matroids, under the same strong complexity-theoretic assumption
mentioned earlier.  
 
Sandwiched between the graphic (computationally easy) and binary matroids (apparently
computationally hard) is the class of regular matroids.  
Since it is interesting to
locate the exact boundary of tractability, we consider here the computational complexity
of estimating (in the FPRAS sense) the partition function of the Ising model 
on a regular matroid.  We show that there is an FRPAS in this situation. (See Section~\ref{sec:matroid} for 
matroid definitions and Section~\ref{sec:decomp} for the definition of
the Tutte polynomial $\Ztilde(\calM;q,\bgamma)$ of a matroid~$\calM$.)
\begin{theorem}\label{thm:main}
There is an FPRAS for the following problem. 
An instance of the problem is a binary matrix representing a regular matroid~$\calM$,
and a set $\bgamma=\{\gamma_e:e\in E(\calM)\}$ of non-negative rational  weights
for elements of the ground set.
The required output is $\Ztilde(\calM;2,\bgamma)$.
\end{theorem}

Aside from the existing FPRAS for graphic matroids, which also works,
by duality, for so-called cographic matroids, the main ingredient
in our algorithm is Seymour's decomposition theorem for 
regular matroids.  This theorem 
has been applied on at least one previous occasion 
to the design of a polynomial-time algorithm.
Golynski and Horton~\cite{GolynskiHorton} use the approach in their algorithm
for finding a minimum-weight basis of the cycle space (or circuit space)
of a regular matroid.
The decomposition theorem states that every regular matroid is either
graphic, cographic, a special matroid on 10 elements named $R_{10}$,
or can be decomposed as a certain kind of sum (called ``1-sum'', ``2-sum'' 
or ``3-sum'') of two smaller 
regular matroids (see Theorem~\ref{thm:Seymour}).  Since we know how to
handle the base cases (graphic, cographic and $R_{10}$), it seems likely
that the decomposition theorem will yield a polynomial time algorithm 
quite directly.  However, there is a catch (which does not arise in~\cite{GolynskiHorton}).  When we pull apart a regular
matroid into two smaller ones, say into the 3-sum of $\calM_1$ and~$\calM_2$, 
four subproblems are generated for each
of the parts $\calM_1$ and~$\calM_2$.  This is fine if the decomposition is fairly
balanced at each step, but that is not always the case.  In the case 
of a highly unbalanced decomposition, we face 
a
combinatorial explosion.

The solution we adopt is to ``solve'' recursively only the smaller 
subproblem, say~$\calM_2$.  Then we construct a constant size matroid~$\Ithree$
that 
we show
is equivalent to $\calM_2$ in the context of any 3-sum.  We then glue
$\Ithree$ onto~$\calM_1$, using the 3-sum operation, in place of~$\calM_2$.
The matroid $\Ithree$ is both graphic and cographic and is small, having just six elements.
Most significantly, it has the property that
forming the 3-sum with $\calM_1$ leaves $\calM_1$ unchanged as a matroid,
though it acquires some new weights from~$\Ithree$.  
(In a sense, $\Ithree$ is an identity for the 3-sum operation.)
Then we just have to find the partition function of $\calM_1$ (with 
amended weights).  Since we have four recursive calls on ``small'' 
subproblems, but only one on  a ``large'' one, 
we not only avoid infinite recursion but we achieve polynomially
bounded running time.
The fact that $\Ithree$ is able to simulate the behaviour of an arbitrary 
regular matroid in the context of a 3-sum is a fortunate accident of the 
specialisation to $q=2$.  
Since the existing algorithm for the graphic case also
makes 
particular use of the fact that $q=2$, one gets the 
impression that the Ising model has very special properties compared 
with the Potts model in general.

\section{Matroid preliminaries}\label{sec:matroid}

A matroid is a combinatorial structure that has a number of equivalent 
definitions, but the one in terms of a rank function is the most
natural here.  A~set $E$ (the ``ground set'') 
together with a rank function 
$r:2^E\to\N$ 
is 
said to be a {\it matroid\/} if the following conditions are satisfied for
all subsets $A,B\subseteq E$: (i)~$0\leq r(A)\leq |A|$, (ii)~$A\subseteq B$
implies $r(A)\leq r(B)$ (monotonicity), and (iii)~$r(A\cup B)+r(A\cap B)
\leq r(A)+r(B)$ (submodularity).  A subset $A\subseteq E$ satisfying $r(A)=|A|$
is said to be {\it independent\/};  a maximal (with respect to inclusion)
independent set is a {\it basis}, and a minimal dependent set is a {\it circuit}.
A~circuit with one element is a {\it loop}.
We denote the ground set of matroid $\calM$ by $E(\calM)$ and its rank function
by~$r_\calM$.  
To every matroid~$\calM$ there is a {\it dual matroid\/}~$\calM^*$
with the same ground set $E=E(\calM)$ but rank function $r_{\calM^*}$ 
given by $r_{\calM^*}(A)=|A|+r_{\calM}(E-A)-r_{\calM}(E)$.
A {\it cocircuit\/} in $\calM$ is a set that is a circuit in~$\calM^*$;
equivalently, a cocircuit is a minimal set that intersects every basis.
A~cocircuit with one element is a {\it coloop}. A thorough exposition of the fundamentals (and beyond) of matroid theory can be found
in Oxley's book~\cite{OxleyBook}.

Important operations on matroids include contraction and deletion.  Suppose 
$T\subseteq E$ is any subset of the ground set of matroid~$\calM$.  
The {\it contraction
$\calM/T$ of $T$ from $\calM$} is the matroid on ground set $E\setminus T$ with 
rank function given by $r_{\calM/T}(A)=r_\calM(A\cup T)-r_\calM(T)$, for
all $A\subseteq E\setminus T$.
The {\it deletion
$\calM\backslash T$ of\/ $T$ from $\calM$} is the matroid on ground set $E\setminus T$ with 
rank function given by $r_{\calM \backslash T}(A)=r_\calM(A)$, for
all $A\subseteq E\setminus T$.  These operations are often combined, and we write
$\calM/T\backslash S$ for the matroid obtained by contracting $T$ from~$\calM$
and then deleting $S$ from the result.  The operations of contraction
and deletion are dual in the sense that $(\calM\backslash T)^*=\calM^*/T$.
For compactness, we shall often miss out set brackets, 
writing $\calM/p_1\backslash p_2,p_3$, for example, 
in place of $\calM/\{p_1\}\backslash\{p_2,p_3\}$.
The {\it restriction\/} $\calM\rest S$ of $\calM$ to  
$S\subseteq E$ is the matroid on ground set~$S$ that inherits
its rank function from~$\calM$; another way of expressing this 
is to say $\calM\rest S=\calM\backslash(E\setminus S)$. 

The matroid axioms are intended to abstract the notion of linear independence
of vectors.  Some matroids can be represented concretely as a matrix~$M$ with 
entries from a field~$K$, the columns of the matrix being identified with the
elements of~$E$.  The rank $r(A)$ of a subset $A\subseteq E$ is then just the 
rank of the submatrix of $M$ formed from columns picked out by~$A$.  A matroid
that can be specified in this way is said to be {\it representable over~$K$}.
A matroid that is representable over $\mathrm{GF}(2)$ is {\it binary}, and one that 
is representable over
every field
is {\it regular\/};  the regular matroids form
a proper subclass of binary matroids.  Another important class of matroids are
ones that arise as the {\it cycle matroid\/} of an undirected (multi)graph $G=(V,E)$.  
Here, the edge set~$E$ of the graph 
forms the ground set of the matroid, and the rank of a subset $A\subseteq E$ is
defined to be $r(A)=|V|-\kappa(A)$, where $\kappa(A)$ is the number of 
connected components in the subgraph $(V,A)$.  
A matroid is {\it graphic\/} if it arises as the cycle matroid of some graph,
and cographic if its dual is graphic.
Both the class of graphic matroids and the class of cographic matroids
are strictly contained in the class of regular matroids.

Now for some definitions more specific to the work in this article.
A {\it cycle\/} in a matroid is any subset of 
the
ground set that 
can be expressed as a disjoint union of circuits.  
We let $\calC(\calM)$ denote the set of cycles of a matroid~$\calM$.
If $\calM$ is
a binary matroid, the symmetric difference of any two cycles is again
a cycle~\cite[Thm.~9.1.2(vi)]{OxleyBook}, so 
$\calC(\calM)$,  viewed as a set of characteristic
vectors on $E(\calM)$, forms a vector
space over $\mathrm{GF}(2)$, which we refer to as the
{\it circuit space\/}  of $\calM$.  
Indeed, any vector space generated 
by a set of vectors in $\mathrm{GF}(2)^E$ can be regarded 
as the circuit space of a binary matroid on ground set~$E$
(see the remark following \cite[Cor.~9.2.3]{OxleyBook}).
The set of cycles $\calC(\calM)$ of a matroid~$\calM$ determines the set of circuits of~$\calM$
(these being just the minimal non-empty cycles),
which in turn determines~$\calM$.  
Note, from the definition of ``cycle''~\cite[\S1.3]{OxleyBook}, that
\begin{equation}\label{eq:cycleRestriction}
\calC(\calM{\rest}(E\setminus T))=\calC(\calM\backslash T)
 = \{C \in \calC(\calM) \mid C \subseteq E\setminus T\}.
\end{equation}
The term ``cycle'' in this context is not widespread, but is used by Seymour~\cite{SeymourDecomp}
in his work on matroid decomposition;  the term ``circuit space'' is more standard.

Consider two binary matroids $\calM_1$ and $\calM_2$ 
with $E(\calM_1)=E_1\cup T$ and $E(\calM_2) = E_2\cup T$ and $E_1\cap E_2=\emptyset$.
The {\it delta-sum\/} of $\calM_1$ and $\calM_2$ is the matroid 
$\calM_1\deltasum\calM_2$ on ground set $E_1\cup E_2$ 
with the following circuit space:
$$
\calC(\calM_1\deltasum\calM_2)=\big\{C\subseteq E_1\cup E_2:C=C_1\oplus C_2
\text{ for some $C_1\in\calC(\calM_1)$ and 
$C_2\in\calC(\calM_2)$}\big\},
$$
where $\oplus$ denotes symmetric difference~\cite{SeymourDecomp}.  
(The \rhs\ of the above equation defines a vector space over 
$\mathrm{GF}(2)$, and hence does describe the circuit space of some
matroid.) 
 
The special case of the delta-sum when $T=\emptyset$ is called the {\it 1-sum\/} of $\calM_1$ and~$\calM_2$.
(It is just the direct sum of the two matroids.)
The special case when $T$ is a singleton that is not a loop or coloop in $\calM_1$ or $\calM_2$,
and $|E(\calM_1)|,|E(\calM_2)|\geq3$, is called the {\it 2-sum\/} of $\calM_1$ or $\calM_2$
and is denoted $\calM_1\oplus_2\calM_2$.
Finally, the special case when $|T|=3$, $T$ is a circuit in both $\calM_1$ and $\calM_2$
but contains no co-circuit of either, 
and $|E(\calM_1)|,|E(\calM_2)|\geq7$, is called the {\it 3-sum\/} of $\calM_1$ or $\calM_2$
and is denoted $\calM_1\oplus_3\calM_2$~\cite{SeymourDecomp}.  

Our main tool is the following 
celebrated result  of Seymour~\cite[(14.3)]{SeymourDecomp}:
\begin{theorem}\label{thm:Seymour}
Every regular matroid $\calM$ may be constructed by means of 1- 2- and 3-sums,
starting with matroids each isomorphic to a minor of $\calM$, and each either 
graphic, cographic or isomorphic to a certain 10-element matroid $R_{10}$.
\end{theorem}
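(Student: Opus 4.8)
The plan is to induct on $|E(\calM)|$, repeatedly peeling off $1$-, $2$- and $3$-sums until every piece lands in one of the three named classes. If $\calM$ is disconnected it is the $1$-sum of its components, each of which is a restriction (hence a minor) of $\calM$, is regular (minors of regular matroids are regular), and has strictly fewer elements; so the inductive hypothesis disposes of this case. If $\calM$ is connected but not $3$-connected, it has an exact $2$-separation $(E_1,E_2)$, and a standard connectivity argument writes $\calM=\calM_1\oplus_2\calM_2$, where $\calM_i$ is built on $E_i$ together with one new basepoint, is a minor of $\calM$, is regular, and satisfies $|E(\calM_i)|<|E(\calM)|$ (as $|E_1|,|E_2|\ge 2$). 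Thus it suffices to treat the case that $\calM$ is $3$-connected.

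For a $3$-connected regular $\calM$, I would bring in Tutte's excluded-minor characterisations: a binary matroid is regular iff it has no $F_7$- or $F_7^*$-minor; a regular matroid is graphic iff it has no $M^*(K_5)$- or $M^*(K_{3,3})$-minor; and, dually, cographic iff it has no $M(K_5)$- or $M(K_{3,3})$-minor. Hence if $\calM$ is neither graphic nor cographic it contains a minor from $\{M(K_5),M(K_{3,3})\}$ and a minor from $\{M^*(K_5),M^*(K_{3,3})\}$. The structural heart of the proof is to show that such an $\calM$ is either isomorphic to the $10$-element matroid $R_{10}$ or contains a minor isomorphic to a particular $12$-element regular matroid $R_{12}$. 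The engine here is Seymour's Splitter Theorem: starting from the small obstructions above and adjoining one element at a time while preserving $3$-connectivity (whirls never arise, since $U_{2,4}$ is not regular, and larger wheels are graphic), a long but finite case analysis shows that the only $3$-connected regular matroid avoiding both the graphic and the cographic obstructions and having no $R_{12}$-minor is $R_{10}$ itself --- which is a \emph{splitter} for the regular matroids, in the sense that any $3$-connected regular matroid with an $R_{10}$-minor equals $R_{10}$.

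It then remains to turn an $R_{12}$-minor into an actual decomposition. The matroid $R_{12}$ carries a natural exact $3$-separation whose ``connecting'' set $T$ is a triangle but not a triad, and one shows this separation lifts to $\calM$: the $3$-connected matroid $\calM$ has an exact $3$-separation $(E_1,E_2)$ across a triangle with $|E_1|,|E_2|\ge 4$, so that $\calM=\calM_1\oplus_3\calM_2$ with each $\calM_i$ a minor of $\calM$, regular, and with $7\le|E(\calM_i)|<|E(\calM)|$ (the strict inequality because each side has more than three elements). Applying the inductive hypothesis to $\calM_1$ and $\calM_2$ expresses each as a sum of graphic, cographic and $R_{10}$ pieces that are minors of $\calM_i$, hence of $\calM$; reassembling via the same $1$-, $2$- and $3$-sums yields the theorem.

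The main obstacle is the middle step: proving that a $3$-connected regular matroid which is neither graphic nor cographic and not isomorphic to $R_{10}$ must contain an $R_{12}$-minor, and then promoting that minor to a genuine $3$-sum in which neither side is too small and the common set is a triangle but not a triad. This is the delicate and lengthy structural analysis that constitutes the bulk of Seymour's argument; by contrast, the outer induction and the reductions via $1$- and $2$-sums are comparatively routine.
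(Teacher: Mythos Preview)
The paper does not prove this theorem at all: it is quoted verbatim as Seymour's result \cite[(14.3)]{SeymourDecomp} and used as a black box, together with the remark that Truemper's work makes the decomposition algorithmic. So there is no ``paper's own proof'' to compare against.

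Your outline is a faithful high-level sketch of Seymour's original argument: strip off $1$- and $2$-sums to reduce to the $3$-connected case, invoke the excluded-minor characterisations of graphic and cographic matroids, use the Splitter Theorem to show that $R_{10}$ is a splitter for the class and that otherwise an $R_{12}$-minor appears, and finally lift the exact $3$-separation of $R_{12}$ to~$\calM$ to obtain a non-trivial $3$-sum. You also correctly flag that this last structural step is where the genuine difficulty lies. One small wrinkle worth noting in your $3$-sum paragraph: the triangle~$T$ is not a subset of $E(\calM)$ --- it is adjoined to both sides when the $3$-separation is converted into a $3$-sum --- so the assertion that each $\calM_i$ is (isomorphic to) a minor of~$\calM$ is not automatic and itself needs an argument; Seymour handles this, but it is one of the places where a sketch can silently skip real content.
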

It is important for us that a polynomial-time algorithm exists for finding the 
decomposition promised by this theorem.  
As Truemper notes~\cite{Truemperpaper}, such an algorithm is given implicitly in Seymour's paper.
However, much more efficient algorithms are known.
In particular,   \cite[(10.6.1)]{TruemperBook} gives
a   polynomial-time algorithm for testing whether a binary matroid is graphic or cographic.
Also, \cite{Truemperpaper} 
gives a  cubic algorithm
for expressing any  
regular matroid that is not graphic, cographic or isomorphic to $R_{10}$
as a 1-sum, 2-sum or 3-sum.
For an exposition of this result, \cite[(8.4.1)]{TruemperBook}  
gives a polynomial-time algorithm that takes as input
a matrix representing a binary matroid and produces a 1-sum or a 2-sum decomposition, or, if 
 the 
matroid is 3-connnected, produces a 3-sum decomposition if one exists.
From the proof of \cite[(14.3)]{SeymourDecomp}, and the conditions on the matroid, (that it be regular, but
not graphic, cographic or isomorphic to $R_{10}$),
the matroid has a 3-separation in which the parts are sufficiently large, so
by \cite[(8.3.12)]{TruemperBook}, the desired 3-sum decomposition does exist (so is constructed).  
 
 Truemper's definitions are slightly different from Seymour's (which we use). However,
 it is easy to check that Truemper's 1-sum \cite[Section 8.2]{TruemperBook} is
 a 1-sum in the sense of Seymour. 
 It can also be checked that Truemper's 2-sum is a 2-sum in the sense of Seymour.
(For this it helps to note
that  the ground-set element in
 $E(\calM_1)\cap E(\calM_2)$ is represented by the column corresponding to element~``$x$''
 in 
 Truemper's matrix~$I \mid B^1$ and by the column corresponding to element~``$y$'' in 
 his matrix~$I \mid B^2$.)
Truemper's 3-sums \cite[Section 8.2]{TruemperBook} are not quite 3-sums in the sense of Seymour, but, as Golynski and Horton have noted~\cite{GolynskiHorton}, it is easy to 
apply an exchange operation \cite[Section 8.5]{TruemperBook} (in linear time) to
obtain a 3-sum in the sense of Seymour. 
  
While Seymour's decomposition theorem is in terms of 1-sums, 2-sums and 3-sums, it will be convenient for us to 
 do our preparatory work, in the following section, using the slightly more general notion of a delta-sum.

\section{Tutte polynomial and decomposition}
\label{sec:decomp}
Suppose $\calM$ is a matroid, $q$ is an indeterminate, and $\bgamma=\{\gamma_e:e\in E(\calM)\}$ 
is a collection of indeterminates, indexed by elements of the ground set of~$\calM$.
The (multivariate) {\it Tutte polynomial\/}
of $\calM$ and~$\bgamma$ is defined to be
\begin{equation}\label{eq:TutteDef}
\Ztilde(\calM;q,\bgamma)=\sum_{A\subseteq E(\calM)}\gamma_A\,q^{-r_\calM(A)},
\end{equation}
where $\gamma_A=\prod_{e\in A}\gamma_e$~\cite[(1.3)]{SokalMulti}.
In this article, we are interested in the {\it Ising model}, which 
corresponds to the specialisation of the above polynomial to $q=2$,
so we shall usually omit the parameter~$q$ in the above notation
and assume that $q$ is set to~2.
As we are concerned with approximate computation, we shall invariably
be working in a environment 
in which
each of the 
indeterminates~$\gamma_e$ is assigned some real value;  
furthermore, we shall be focusing on the {\it ferromagnetic\/} case, in which
those values or {\it weights\/} are all non-negative.
For convenience, we refer to the pair $(\calM,\bgamma)$ as a ``weighted matroid'',
and we will assume throughout that $\gamma_e\geq0$ for all $e\in E(\calM)$.
For background material on the multivariate Tutte polynomial, and its relation 
to the classical 2-variable Tutte polynomial, refer to Sokal's 
expository article~\cite{SokalMulti}. 

In order to exploit Theorem~\ref{thm:Seymour}, 
we need to investigate how definition~(\ref{eq:TutteDef})
behaves 
when $\calM = \calM_1 \deltasum \calM_2$.
Suppose that $E_1$, $T $ and $E_2$ are mutually disjoint sets 
and consider two binary matroids $\calM_1$ and $\calM_2$ 
with $E(\calM_1)=E_1\cup T$ and $E(\calM_2) = E_2\cup T$.
\begin{definition}
For $A\subseteq E_i$ and $S\subseteq T$,  let
$e_i(A,S) = r_{\calM_i}(A\cup S) - r_{\calM_i}(A)$.
For   $A\subseteq E_1$ and $B\subseteq E_2$,
let 
$\cor(A,B) =  r_{\calM}(A\cup B) - r_{\calM_1}(A) - r_{\calM_2}(B) $.
\end{definition}
Intuitively,  $e_i(A,S)$ is the ``excess'' rank that~$S$ adds to~$A$ in $\calM_i$
and
$\cor(A,B)$ is the appropriate ``correction'' to the rank function of~$\calM$ relative
to the rank function of the  
1-sum of~$\calM_1$ and~$\calM_2$,
which assigns rank
$r_{\calM_1}(A) + r_{\calM_2}(B)$ to
$A\cup B$.

Suppose 
$A\subseteq E_1$ and $B\subseteq E_2$.
Consider the vector spaces $V_1,V_2$ and $V$ over $\mathrm{GF}(2)$ 
on $A\cup B\cup T$ defined as follows:
\begin{align*}V_1&=\calC(\calM_1\rest(A\cup T)),\\
V_2&=\calC(\calM_2\rest(B\cup T)),\text{ and} \\
V&=V_1+V_2,
\end{align*}
which is the span of $V_1 \cup V_2$.
Let $\Pi_T:V\to V$ denote the projection of $A\cup B\cup T$ onto $T$.  That is, if $x\in V$ is 
the characteristic vector of some set $X\subseteq A\cup B\cup T$, then 
$\Pi_T(x)$ is the characteristic vector of~$X\cap T$.  Informally, $\Pi_T$
sets to zero everything outside~$T$.  
From now on, we won't distinguish subsets of $A\cup B\cup T$ and their characteristic vectors.
Let $\Pi_T(V)$ denote the range of~$\Pi_T$.

\begin{lemma}\label{lem:rankOfAUB}
Consider two binary matroids $\calM_1$ and $\calM_2$ with
$E(\calM_1) = E_1 \cup T$ and $E(\calM_2)= E_2 \cup T$. Let
$\calM = \calM_1 \Delta \calM_2$. Suppose $A\subseteq E_1$
and $B \subseteq E_2$.
With $V_1$, $V_2$, $V$ and $\Pi_T$ as above, 
\begin{equation}\label{eq:cident}
c(A,B)= e_1(A,T)+e_2(B,T)+\dim(V_1\cap V_2)+\dim\Pi_T(V) -2|T|.
\end{equation}
Specialising to the 2-sum, 
\begin{equation}\label{eq:key2}
c(A,B)=e_1(A,T)+e_2(B,T)+\dim\Pi_T(V)-2,
\end{equation}
and specialising to the 3-sum,
\begin{equation}\label{eq:key3}
c(A,B)=e_1(A,T)+e_2(B,T)+\dim\Pi_T(V)-5.
\end{equation}
\end{lemma}

\begin{proof}
From the definition of delta-sum, and noting (\ref{eq:cycleRestriction}),
the kernel $\ker\Pi_T$ of~$\Pi_T$ (i.e., all vectors in $V$ that map to the zero vector
under~$\Pi_T$) is the cycle space of the matroid $\calM\rest (A\cup B)$.  
Thus,
$$
\dim\calC(\calM\rest (A\cup B))=\dim\ker\Pi_T=\dim V-\dim\Pi_T(V),
$$
where the second equality is from the Rank-nullity Theorem~\cite[\S50 Thm~1]{Halmos}.

\def\im{\mathop\mathrm{im}\nolimits}
We need the following relationship between the rank of a matroid $\calM'$
on ground set~$E$ 
and the dimension of its cycle space:
\begin{equation}\label{eq:orthog}
\dim\calC(\calM')+r_{\calM'}(E)=|E|.
\end{equation}
This identity comes from the observation that 
$\calC(\calM')$ 
contains precisely those 
vectors that are orthogonal to all rows of the matrix $M$ 
representing~$\calM'$.
So 
$\calC(\calM')$ 
is the orthogonal complement of the row space of~$M$, 
and the dimensions of these two spaces sum to $|E|$, 
the number of columns in~$M$~\cite[\S17 Thm~1]{Halmos}.
Applying~(\ref{eq:orthog}) in turn to the matroids $\calM\rest (A\cup B)$, $\calM_1\rest(A\cup T)$
and $\calM_2\rest(B\cup T)$, we obtain
\begin{align}
r_\calM(A\cup B)&=|A\cup B| -\dim\calC(\calM\rest (A\cup B))=|A|+|B|-\dim V+\dim\Pi_T(V),\label{rM}\\
r_{\calM_1}(A\cup T)&=|A\cup T|-\dim\calC(\calM_1\rest (A\cup T))=|A|+|T|-\dim V_1,\text{ and}\notag\\
r_{\calM_2}(B\cup T)&=|B\cup T|-\dim\calC(\calM_2\rest (B\cup T))=|B|+|T|-\dim V_2.\notag
\end{align}
It follows immediately from the last two equations that 
\begin{align}
r_{\calM_1}(A)&=|A|+|T|-\dim V_1-e_1(A,T),\text{ and}\label{rM1}\\
r_{\calM_2}(B)&=|B|+|T|-\dim V_2-e_2(B,T).\label{rM2}
\end{align}
Subtracting (\ref{rM1}) and (\ref{rM2}) from (\ref{rM}),
\begin{align*}
c(A,B)&=r_\calM(A\cup B)-r_{\calM_1}(A)-r_{\calM_2}(B)\\
&= e_1(A,T)+e_2(B,T)+\dim V_1+\dim V_2 -\dim V +\dim\Pi_T(V) -2|T|\\
&= e_1(A,T)+e_2(B,T)+\dim(V_1\cap V_2)+\dim\Pi_T(V) -2|T|,
\end{align*}
where in the final equality we use the fact~\cite[\S12 Ex.~7]{Halmos} that
$$\dim(V_1+V_2)=\dim V_1+\dim V_2 -\dim(V_1\cap V_2).$$  
This gives (\ref{eq:cident}).

Specialising (\ref{eq:cident}) to the 2-sum, $|T|=1$ and $\dim(V_1\cap V_2)=0$ 
(since the only cycle common to $\calM_1$ and $\calM_2$ is the empty cycle),
giving~(\ref{eq:key2}).
Specialising to the 3-sum, $|T|=3$ and $\dim(V_1\cap V_2)=1$ 
(since the cycles common to $\calM_1$ and $\calM_2$ are the empty cycle 
and $T$ itself), giving~(\ref{eq:key3}).
\end{proof}

The point about (\ref{eq:key2}) and (\ref{eq:key3}) is that they are easy to compute with,
since $\dim\Pi_T(V)$ has a very clean combinatorial interpretation.  
Observe that $\Pi_T(V)$ can be viewed as the set 
\begin{equation}\label{eq:PiT}
\big\{S\subseteq T: S=(C_1\oplus C_2)\cap T\text{ where } C_1\in\calC(\calM_1\rest (A\cup T)), 
C_2\in\calC(\calM_2\rest (B\cup T))\big\}.
\end{equation}
We will see, when we come to apply Lemma~\ref{lem:rankOfAUB}, that the dimension
of this set, viewed as a vector space, is easy to calculate.

The interaction between the classical {\it bivariate\/} Tutte polynomial
and 2- and 3-sums has been investigated previously, for example by
Andrzejak~\cite{Andrzejak}.  In principle, it would be possible to 
assure oneself that his proof carries over from the bivariate to the
multivariate situation, and then recover the appropriate formulas
(Lemmas \ref{lem:2sumsplit} and~\ref{lem:3sumsplit} below) by appropriate
algebraic translations.  However, in the case of the 3-sum there is
an obstacle.  While Andrzejak's identities remain valid, as identities of
rational functions on~$q$, they become degenerate (through division
by zero) under the specialisation $q=2$.  In fact, this degeneracy is 
crucial to us, in that it reduces the dimension of the bilinear form 
in Lemma~\ref{lem:3sumsplit} from five, as in Andrzejak's general result,
to four, in our special formula for $q=2$.
In light of these considerations, and because certain intermediate results 
in our proofs are in any case required later, we derive the required 
formulas here.  First 
we give the
the formula for (a slight relaxation of) the 2-sum.
 
\begin{lemma}\label{lem:2sumsplit}
Suppose  $(\calM_1,\bgamma_1)$ and $(\calM_2,\bgamma_2)$ are weighted binary 
matroids with $E(\calM_1)\cap E(\calM_2)=\{p\}$,
where $p$ is not a loop in either $\calM_1$ or~$\calM_2$.  
Denote by $\bgamma=\bgamma_1\deltasum\bgamma_2$ 
the weighting on $E(\calM_1\deltasum\calM_2)$
inherited from $\calM_1$ and $\calM_2$.
Then
\begin{equation}\label{eq:2sumsplit}
\Ztilde(\calM_1\deltasum\calM_2;\bgamma)=
(\Ztilde(\calM_1\backslash p;\bgamma_1),\Ztilde(\calM_1/p;\bgamma_1))
\begin{pmatrix}2&-1\\-1&1\end{pmatrix}
\begin{pmatrix}\Ztilde(\calM_2\backslash p;\bgamma_2)\\ \Ztilde(\calM_2/p;\bgamma_2)\end{pmatrix}.
\end{equation}
\end{lemma}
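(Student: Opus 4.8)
The plan is to work directly with the definition~(\ref{eq:TutteDef}) at $q=2$. Write $E(\calM_1)=E_1\cup\{p\}$, $E(\calM_2)=E_2\cup\{p\}$, and recall that $\calM_1\deltasum\calM_2 = \calA\backslash\{p\}$, where $\calA$ is the amalgam furnished by Lemma~\ref{lem:amalgam} (here $N=\calM_1\rest\{p\}=\calM_2\rest\{p\}=\calN_1$, since $p$ is not a loop in either matroid). A subset of $E(\calM_1\deltasum\calM_2)=E_1\cup E_2$ is a disjoint pair $A\cup B$ with $A\subseteq E_1$, $B\subseteq E_2$, so
\[
\Ztilde(\calM_1\deltasum\calM_2;\bgamma)=\sum_{A\subseteq E_1}\sum_{B\subseteq E_2}\gamma_A\gamma_B\,2^{-r_\calM(A\cup B)}.
\]
The first key step is to pin down $r_\calM(A\cup B)$ exactly. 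Equation~(\ref{eq:Cboundone}) gives the upper bound $\cor(A,B)\le\min(0,e_1(A,p)+e_2(B,p)-1)$; I would argue that this bound is in fact an equality. Each $e_i(A,p)\in\{0,1\}$, so the bound reads: $\cor(A,B)=0$ unless $e_1(A,p)=e_2(B,p)=1$ (i.e.\ $p$ is independent over both $A$ and $B$), in which case $\cor(A,B)=-1$. For the matching lower bound, note that when $e_1(A,p)=e_2(B,p)=1$, the set $A\cup\{p\}\cup B$ has rank $r_{\calM_1}(A)+1+r_{\calM_2}(B)$ in $\calA$ by the amalgam property applied on each side together with the fact that the single common element $p$ can be "shared"; deleting $p$ cannot lower the rank by more than $1$, giving $r_\calM(A\cup B)\ge r_{\calM_1}(A)+r_{\calM_2}(B)-1$. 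In the complementary case $\cor(A,B)\ge 0$ is immediate since $r_\calM$ is subadditive the other way is trivial and $r_\calM(A\cup B)\le r_{\calM_1}(A)+r_{\calM_2}(B)$ always, while $\ge$ follows because one of $A,B$ already spans $p$. Thus $r_\calM(A\cup B)=r_{\calM_1}(A)+r_{\calM_2}(B)+\cor(A,B)$ with $\cor$ as described.

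The second step is bookkeeping. Partition the pairs $(A,B)$ according to the two bits $b_1=e_1(A,p)$, $b_2=e_2(B,p)$. Since the contraction $\calM_i/p$ has rank function $r_{\calM_i/p}(A)=r_{\calM_i}(A\cup\{p\})-r_{\calM_i}(\{p\})=r_{\calM_i}(A)-1+e_i(A,p)$ (using that $p$ is not a loop, so $r_{\calM_i}(\{p\})=1$), one recognises $\Ztilde(\calM_i/p;\bgamma_i)=\sum_{A}\gamma_A 2^{-r_{\calM_i}(A)+1-e_i(A,p)}$, i.e.\ $\Ztilde(\calM_i/p;\bgamma_i)$ equals $2$ times the sum of $\gamma_A 2^{-r_{\calM_i}(A)}$ over $A$ with $e_i(A,p)=1$ plus the sum over $A$ with $e_i(A,p)=0$; while $\Ztilde(\calM_i\backslash p;\bgamma_i)=\sum_A\gamma_A 2^{-r_{\calM_i}(A)}$ is the sum over all $A$. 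Introducing the "half-sums" $U_i=\sum_{A:e_i(A,p)=0}\gamma_A 2^{-r_{\calM_i}(A)}$ and $V_i=\sum_{A:e_i(A,p)=1}\gamma_A 2^{-r_{\calM_i}(A)}$, one has $\Ztilde(\calM_i\backslash p)=U_i+V_i$ and $\Ztilde(\calM_i/p)=2U_i+V_i$, and hence $U_i=\Ztilde(\calM_i/p)-\Ztilde(\calM_i\backslash p)$, $V_i=2\Ztilde(\calM_i\backslash p)-\Ztilde(\calM_i/p)$. Splitting the double sum by the bits $b_1,b_2$ and inserting the $2^{-\cor(A,B)}$ factor ($=2$ when $b_1=b_2=1$, else $1$) gives
\[
\Ztilde(\calM_1\deltasum\calM_2;\bgamma)=U_1U_2+U_1V_2+V_1U_2+2V_1V_2.
\]
It then remains to check that substituting $U_i,V_i$ in terms of $\Ztilde(\calM_i\backslash p),\Ztilde(\calM_i/p)$ reproduces exactly the bilinear form $\begin{pmatrix}2&-1\\-1&1\end{pmatrix}$ in~(\ref{eq:2sumsplit}); this is a direct $2\times2$ matrix identity, most transparently seen by writing the change of variables $(U_i,V_i)\mapsto(\Ztilde(\calM_i\backslash p),\Ztilde(\calM_i/p))$ as multiplication by $\begin{pmatrix}1&1\\2&1\end{pmatrix}$ and conjugating the matrix $\begin{pmatrix}1&1\\1&2\end{pmatrix}$ (the Gram matrix of the form $U_1U_2+U_1V_2+V_1U_2+2V_1V_2$) by its inverse.

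The main obstacle is the first step: establishing that the rank bound~(\ref{eq:Cboundone}) holds with equality, i.e.\ that the amalgam $\calA$ has the largest rank function consistent with its constituents. For a general amalgam this can fail, but here $N=\calN_1$ is a single non-loop element, and the circuit-space description~(\ref{eq:amalgamDef}) makes the situation rigid: a dependency of $A\cup B$ in $\calA\backslash\{p\}$ is a symmetric difference $C_1\oplus C_2$ with $C_1\in\calC(\calM_1)$, $C_2\in\calC(\calM_2)$, each $C_i\subseteq E_i\cup\{p\}$, and such a combination is nonempty and inside $E_1\cup E_2$ precisely when $p\in C_1\cap C_2$ (forcing $p$ dependent on both sides) or when $C_2=\emptyset$ and $C_1\subseteq E_1$ (or vice versa). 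Counting independent sets of $A\cup B$ against this description yields the claimed rank formula. I would present this via circuit spaces rather than via Oxley's rank inequality, since the circuit-space picture automatically gives both the upper and the lower bound and avoids any appeal to the (non-obvious) tightness of~(\ref{eq:rankupper}). Everything after that is routine algebra.
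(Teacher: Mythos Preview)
Your overall strategy is exactly the one the paper uses: split the double sum according to the values of $e_i(A,\{p\})$, express the result as a bilinear form in the ``half-sums'' $U_i,V_i$ (the paper calls these $\zhat^i_0,\zhat^i_1$), then change variables to $\big(\Ztilde(\calM_i\backslash p),\Ztilde(\calM_i/p)\big)$ via the matrix $\begin{pmatrix}1&1\\2&1\end{pmatrix}$ and verify the resulting $2\times2$ identity.  The paper also defers the lower bound on $\cor(A,B)$ to a separate argument (its Appendix~A.1), so even the structure of your write-up matches.

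However, there is a genuine error in the middle of the argument that causes the final computation to fail. You correctly quote the bound $\cor(A,B)\le\min(0,e_1(A,p)+e_2(B,p)-1)$, but then misread it: that minimum equals $-1$ precisely when $e_1(A,p)=e_2(B,p)=0$ (i.e.\ when $p$ lies in the span of both $A$ and~$B$), not when both equal~$1$. Your lower-bound sketch has the same reversal. With your convention $U_i\leftrightarrow e_i=0$ and $V_i\leftrightarrow e_i=1$, the correct bilinear form is
\[
\Ztilde(\calM_1\deltasum\calM_2;\bgamma)=2U_1U_2+U_1V_2+V_1U_2+V_1V_2,
\]
i.e.\ the Gram matrix is $\begin{pmatrix}2&1\\1&2\end{pmatrix}$ with the $2$ in the $(U,U)$-slot, not the $(V,V)$-slot. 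If you carry your version $\begin{pmatrix}1&1\\1&2\end{pmatrix}$ through the change of basis you will obtain $\begin{pmatrix}5&-2\\-2&1\end{pmatrix}$ rather than the target $\begin{pmatrix}2&-1\\-1&1\end{pmatrix}$. (A quick sanity check in the graphic case: if $A$ and $B$ both connect the endpoints of~$p$, the two connected pieces merge through the shared vertices, which is what drops the rank by one.) Once you swap the case, the rest of your proof goes through unchanged and coincides with the paper's.
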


\begin{proof}
The machinery of  the proof is a little heavy in relation to the 
scale of the result, but its use will provide a warm-up for the proof of the 
analogous result for 3-sums (Lemma~\ref{lem:3sumsplit}).

Let $E_1 = E(\calM_1)\setminus \{p\}$ and $E_2 = E(\calM_2) \setminus\{p\}$.
Use  the definition of $c(A,B)$
to write the Tutte polynomial of the delta-sum as 
\begin{equation}\label{eq:2sumExpr}
\Ztilde(\calM_1\deltasum \calM_2;\bgamma) = 
\sum_{A\subseteq E_1} \sum_{B\subseteq E_2}\gamma_A \gamma_B
q^{-r_{\calM_1}(A) -r_{\calM_2}(B)}
q^{-\cor(A,B)}.
\end{equation} 
We are in the situation of Lemma~\ref{lem:rankOfAUB}, specifically (\ref{eq:key2}),
with $T=\{p\}$.  Recall that $\Pi_T(V)$ is given by~(\ref{eq:PiT}).
We claim that there exists a cycle $C\in \calC(\calM_1\rest (A\cup\{p\}))$ 
(respectively $C\in \calC(\calM_2\rest (B\cup\{p\}))$)
with $p\in C$
if and only if $e_1(A,\{p\})=0$ (respectively $e_2(B, \{p\})=0$).  
For the ``if'' direction, suppose 
that $e_1(A,\{p\})=0$.  Let $A'\subseteq A$ be any maximum independent set 
in $\calM_1\rest A$.  Since $e_1(A,\{p\})=0$, the set $A\cup\{p\}$ is dependent,
so $A'\cup\{p\}$ is dependent.
Take any circuit~$C$ (minimal dependent set) in 
$A'\cup \{p\}$ 
and note that~$C$
must contain~$p$.  For the ``only if'' direction, suppose $e_1(A,\{p\})=1$.
Observe that $p$ cannot be part 
of any circuit~$C\in \calC(\calM_1\rest (A\cup\{p\}))$, as removing $p$ from~$C$
will always reduce the rank by one, contradicting minimality. 
Pulling all these observations together, we see that $\Pi_T(V)=\{\emptyset\}$
if $e_1(A,\{p\})=1$ and $e_2(B,\{p\})=1$, and $\Pi_T(V)=\{\emptyset,\{p\}\}$
otherwise.  In the former case $\dim\Pi_T(V)=0$, and in the latter $\dim\Pi_T(V)=1$.
Substituting into (\ref{eq:key2}),
$$
\cor(A,B)=\begin{cases}
   -1,&\text{iff $e_1(A,\{p\})=0$ and $e_2(B,\{p\})=0$;}\\
   0,&\text{otherwise}.
\end{cases}
$$

Thus, letting $\zhat$ be the column vector
$\zhat^i=(\zhat^i_0,\zhat^i_1)^\tr$, for $i\in\{1,2\}$, where
$$
\zhat^i_0=\sum_{A:A\subseteq E_i,e_i(A,\{p\})=0}\gamma_A q^{-r_{\calM_i}(A)}
\quad\text{and}\quad
\zhat^i_1=\sum_{A:A\subseteq E_i,e_i(A,\{p\})=1}\gamma_A q^{-r_{\calM_i}(A)},
$$
we may express identity (\ref{eq:2sumExpr}) as 
\begin{equation}\label{eq:lhs2sumsplit}
\Ztilde(\calM_1\deltasum \calM_2;\bgamma) = (\zhat^1)^\tr C \zhat^2,
\end{equation}
where
$$
C =\begin{pmatrix}q&1\\1&1\end{pmatrix} = \begin{pmatrix}2&1\\1&1\end{pmatrix}.
$$

Having expressed the \lhs\ of (\ref{eq:2sumsplit}) in terms of $\zhat^1$ and $\zhat^2$,
we now wish to do the same for the \rhs.
For $i\in\{1,2\}$, let 
$$z^i=(z^i_0,z^i_1)^\tr=(\Ztilde(\calM_i\backslash p;\bgamma_i),\Ztilde(\calM_i/p;\bgamma_i))^\tr,$$
and observe that 
$$  \Ztilde(\calM_i\backslash p;\bgamma_i)
 = \sum_{A \subseteq E_i} \gamma_A q^{-r_{\calM_i}(A)}$$
and
$$  \Ztilde(\calM_i/p;\bgamma_i)
= \sum_{A \subseteq E_i} \gamma_A q^{-r_{\calM_i}(A)}
q^{1-e_i(A,\{p\})},$$ so

\begin{equation}\label{eq:coordchange'}
z^i=V\zhat^i,
\end{equation}
where 
$$
V=\begin{pmatrix}1&1\\q&1\end{pmatrix} = \begin{pmatrix}1&1\\2&1\end{pmatrix}.
$$
(We are using the fact that $\bgamma$ agrees with $\bgamma_1$ on $E_1$
and with $\bgamma_2$ on~$E_2$.)
Now, $\zhat^i_0\geq0$ and $\zhat^i_1>0$ (since 
the weights $\gamma_e$ are non-negative and
the term $A=\emptyset$ contributes 
positively to the sum defining~$\zhat^i_1$).  Together with (\ref{eq:coordchange'}) these
facts imply 
\begin{equation}\label{eq:z0z1ineqs}
z^i_0\leq z^i_1< 2z^i_0,
\end{equation}
an inequality we shall need later.  With
$$
D=\begin{pmatrix}2&-1\\-1&1\end{pmatrix},
$$
we can express the \rhs\ of (\ref{eq:2sumsplit}), using transformation (\ref{eq:coordchange'}), as 
\begin{equation}\label{eq:rhs2sumsplit}
(z^1)^\tr D z^2=(V \zhat^1)^\tr DV\zhat^2 =(\zhat^1)^\tr (V^\tr DV)\zhat^2.
\end{equation}
Comparing (\ref{eq:lhs2sumsplit}) and (\ref{eq:rhs2sumsplit}), we see that
to complete the proof, we just need to verify the matrix equation
$C=V^\tr DV$
with $C$ and $V$ specialised to $q=2$, and this is easily
done.  
\end{proof}

The same calculation can be carried through for 
(a slight relaxation of) the 3-sum.  Denote by $D$ the matrix
$$
D = \begin{pmatrix}
 4 & -1 & -1 & -1\\
 -1 & 1 & 0 & 0\\
 -1 & 0 & 1 & 0 \\
 -1 & 0 & 0 & 1\\
\end{pmatrix}.
$$

\begin{lemma}\label{lem:3sumsplit}
Suppose that $(\calM_1,\bgamma_1)$ and $(\calM_2,\bgamma_2)$ are weighted
binary matroids with 
$E(\calM_1)\cap E(\calM_2)=T=\{p_1,p_2,p_3\}$, and suppose 
also that $T$ is a circuit in both $\calM_1$ and $\calM_2$.  For $i=1,2$, let 
$$z^i=\big(\Zdw{\calM_i}{\bgamma_i},\Zow{\calM_i}{\bgamma_i},
\Ztw{\calM_i}{\bgamma_i},\Zthw{\calM_i}{\bgamma_i}\big)^\tr.$$
Denote by $\bgamma=\bgamma_1\deltasum\bgamma_2$ the weighting on $E(\calM_1\deltasum\calM_2)$
inherited from $\calM_1$ and $\calM_2$.
Then 
\begin{equation}\label{eq:3sumsplit}
\Ztilde(\calM_1\deltasum\calM_2;\bgamma)=(z^1)^\tr Dz^2.
\end{equation}
\end{lemma}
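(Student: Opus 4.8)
The plan is to imitate the proof of Lemma~\ref{lem:2sumsplit}, carrying out the extra bookkeeping forced by the three elements of~$T$. Since $T=\{p_1,p_2,p_3\}$ is a $3$-element circuit of both $\calM_1$ and $\calM_2$, we have $\calM_1\rest T=\calM_2\rest T=\calN_3$, so the constructions of this section apply with $N=\calN_3$. Writing $E_i=E(\calM_i)-T$ and unfolding the definition of $\cor$, I would first record, exactly as in~(\ref{eq:2sumExpr}),
\[
\Ztilde(\calM_1\deltasum\calM_2;\bgamma)=\sum_{A\subseteq E_1}\sum_{B\subseteq E_2}\gamma_A\gamma_B\,q^{-r_{\calM_1}(A)-r_{\calM_2}(B)}\,q^{-\cor(A,B)},
\]
where $\cor(A,B)$ is bounded above by~(\ref{eq:Cboundthree}). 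The heart of the proof is to determine $q^{-\cor(A,B)}$ exactly in the specialisation $q=2$.

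For $X\subseteq E_i$ set $x_j(X)=1-e_i(X,\{p_j\})\in\{0,1\}$ (since $p_j$ is not a loop of $\calM_i$, $e_i(X,\{p_j\})\in\{0,1\}$), so $x_j(X)=1$ precisely when $p_j$ becomes a loop of $\calM_i/X$, and write $L_i(X)=\{j:x_j(X)=1\}$. Fixing a $\mathrm{GF}(2)$-representation of $\calM_i$ with column vectors $v_e$, the relation $v_{p_1}+v_{p_2}+v_{p_3}=0$ (which holds because $T$ is a $3$-circuit) passes to the quotient by $\langle v_e:e\in X\rangle$, so the images $\bar v_{p_1},\bar v_{p_2},\bar v_{p_3}$ in $\calM_i/X$ again sum to~$0$. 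Hence exactly one of three things occurs: the $\bar v_{p_j}$ are three distinct non-zero vectors (no $p_j$ a loop of $\calM_i/X$, every $2$-subset of $T$ of rank~$2$ there); or one $\bar v_{p_j}$ is zero and the other two are equal and non-zero (exactly one $p_j$ a loop, every $2$-subset of rank~$1$); or all three are zero (every $p_j$ a loop). In particular $|L_i(X)|\in\{0,1,3\}$, whence also $\sum_j x_j(A)x_j(B)=|L_1(A)\cap L_2(B)|\in\{0,1,3\}$. Reading the values of $e_i(\cdot,\{p_j\})$ and $e_i(\cdot,\{p_j,p_k\})$ off these three shapes, a short case check over the possible pairs of shapes of $(\calM_1/A)\rest T$ and $(\calM_2/B)\rest T$ shows that the right-hand side of~(\ref{eq:Cboundthree}) equals $-\log_2\bigl(1+\sum_j x_j(A)x_j(B)\bigr)$, an integer lying in $\{0,-1,-2\}$. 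As in the $2$-sum case, I would relegate to an appendix the verification that this upper bound on $\cor(A,B)$ is actually attained, so that at $q=2$
\[
q^{-\cor(A,B)}=1+\sum_{j=1}^{3}x_j(A)\,x_j(B).
\]

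It then remains to package the double sum as a bilinear form. Put $w_i(X)=\gamma_X q^{-r_{\calM_i}(X)}$ and $\ell_i(X)=(1,x_1(X),x_2(X),x_3(X))^{\tr}\in\R^4$, so the last identity reads $q^{-\cor(A,B)}=\ell_1(A)^{\tr}\ell_2(B)$, and hence
\[
\Ztilde(\calM_1\deltasum\calM_2;\bgamma)=\sum_{A,B}w_1(A)w_2(B)\,\ell_1(A)^{\tr}\ell_2(B)=(\zhat^1)^{\tr}\zhat^2,\qquad \zhat^i:=\sum_{X\subseteq E_i}w_i(X)\,\ell_i(X).
\]
The first coordinate of $\zhat^i$ is $\sum_X w_i(X)=\Ztilde(\calM_i\backslash T;\bgamma_i)=z^i_0$, and for $j\in\{1,2,3\}$, using $r_{\calM_i/p_j}(X)=r_{\calM_i}(X\cup p_j)-1$ and $q=2$ one gets $\Ztilde(\calM_i/p_j\backslash T-p_j;\bgamma_i)=\sum_X w_i(X)q^{1-e_i(X,\{p_j\})}=\sum_X w_i(X)\bigl(1+x_j(X)\bigr)$, i.e.\ $z^i_j=z^i_0+\zhat^i_j$. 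Thus $\zhat^i=Wz^i$ with
\[
W=\begin{pmatrix}1&0&0&0\\-1&1&0&0\\-1&0&1&0\\-1&0&0&1\end{pmatrix},
\]
and therefore $\Ztilde(\calM_1\deltasum\calM_2;\bgamma)=(z^1)^{\tr}(W^{\tr}W)z^2$; since $W^{\tr}W=D$ by direct computation, this is~(\ref{eq:3sumsplit}).

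The one genuinely non-routine point, which I would prove in an appendix exactly as for the $2$-sum, is that the bound~(\ref{eq:Cboundthree}) on $\cor(A,B)$ is an equality — equivalently, that no unanticipated linear dependence among the circuits of $\calM_1\deltasum\calM_2=\calA\backslash T$ raises the rank of $A\cup B$ above $r_{\calM_1}(A)+r_{\calM_2}(B)-\log_2\bigl(1+\sum_j x_j(A)x_j(B)\bigr)$; I would argue this directly at the level of circuit spaces. This is precisely where the restriction $q=2$ does its work: it pins $-\cor(A,B)$ to the set $\{0,1,2\}$, so the correction factor is a small integer and the bilinear form collapses to dimension~$4$ — the degeneracy, relative to Andrzejak's five-dimensional form, flagged before the lemma.
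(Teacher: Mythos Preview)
Your argument is correct, and the overall architecture---write the delta-sum as a double sum with factor $q^{-\cor(A,B)}$, classify $A$ and $B$ by which $p_j$ they span, compute $\cor(A,B)$ via the bound~(\ref{eq:Cboundthree}) plus an appendix lower bound, then do linear algebra to pass to the $z^i$---matches the paper. Your case analysis in $\mathrm{GF}(2)$ (the three ``shapes'' of $\bar v_{p_1},\bar v_{p_2},\bar v_{p_3}$) recovers exactly the paper's five predicates $\pred^i_0,\ldots,\pred^i_4$, and the deferred lower-bound check is what the paper's Appendix~\ref{App:two} does.

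Where you diverge is in the linear algebra. The paper stays in five dimensions: it defines $\zhat^i\in\R^5$ indexed by the predicates, builds the $5\times5$ correction matrix~$C$ with entries $q^{-\cor(A,B)}$, introduces a fifth minor $\Ztilde(\calM_i/T)$ to get a $5\times5$ change-of-basis~$V$ with $z^i=V\zhat^i$, pads~$D$ with a zero row and column to $D'$, and finally verifies $C=V^{\tr}D'V$ by computation. The dimension drop from five to four is observed \emph{a posteriori}, as an accident of the rank of~$C$ at $q=2$. You instead exploit $q=2$ from the outset via the identity $2^{-\cor(A,B)}=1+|L_1(A)\cap L_2(B)|=\ell_1(A)^{\tr}\ell_2(B)$, which immediately factorises the correction as a rank-one bilinear form on $\R^4$; the resulting $\zhat^i\in\R^4$ carries the identity form, and the triangular change of basis $\zhat^i=Wz^i$ with $W^{\tr}W=D$ finishes at once. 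Your route is shorter and explains conceptually \emph{why} four dimensions suffice (it is exactly the identity $2^x=1+x$ for $x\in\{0,1\}$); the paper's route, on the other hand, sets up the $5\times5$ apparatus that is reused in Lemmas~\ref{lem:ineqs1} and~\ref{lem:ineqs2}, so those later arguments would need minor rewording if your $\zhat^i$ were substituted for the paper's.
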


\begin{proof}
Use the definition of $c(A,B)$ 
to write the Tutte polynomial of   the delta-sum as 
\begin{equation}
\Ztilde(\calM_1\deltasum\calM_2;\bgamma) = 
\sum_{A\subseteq E_1} \sum_{B\subseteq E_2}\gamma_A \gamma_B
q^{-r_{\calM_1}(A) -r_{\calM_2}(B)}
q^{-\cor(A,B)}.
\end{equation} 
Now, to construct an equation analogous to~(\ref{eq:2sumsplit}), we will
work with the minors
$\calM_i / S_1 \backslash S_2$
for partitions~$(S_1,S_2)$ of~$T$.
This gives eight different minors of~$\calM_i$, but some of them are equivalent.
In particular,  since $T$ is a circuit of~$\calM_i$
(which means that $p_1$, $p_2$ and $p_3$ are linearly dependent in the binary matrix
representing $\calM_i$
but any pair of them is independent),
any partition~$(S_1,S_2)$ with $|S_1| \geq  2$
gives a matroid $\calM_i / S_1 \backslash S_2$
which is equivalent to 
$\calM_i /  T $.
To see this, note that for $X\subseteq E_i$,
$r_{\calM_i / S_1 \backslash S_2}(X) = 
r_{\calM_i}(X\cup S_1) - r_{\calM_i}(S_1) =
r_{\calM_i}(X \cup  T) - 2
= 
r_{\calM_i}(X \cup  T) - r_{\calM_i}(T)$.
Thus, we will use the five minors with $|S_1|\neq 2$.
  
Let's collect the formulas for the Tutte polynomials of the minors that we'll need.
\begin{align}
\label{tutteminors}
   \Zdw{\calM_i}{\bgamma_i} &= \sum_{A \subseteq E_i} \gamma_A q^{-r_{\calM_i}(A)}\\
\nonumber
   \Zjw{\calM_i}{\bgamma_i} &= \sum_{A \subseteq E_i} \gamma_A q^{-r_{\calM_i}(A)}
q^{1-e_i(A,\{p_j\})},\quad\text{for $j\in\{1,2,3\}$,}\\
\noalign{\noindent and}
\nonumber
 \Zcw{\calM_i}{\bgamma_i} &= \sum_{A \subseteq E_i} \gamma_A 
 q^{-r_{\calM_i}(A)}q^{2-e_i(A,T)}
\end{align} 

With an eye to the proof of Lemma~\ref{lem:2sumsplit},
we next need to understand the function $e_i(A,S)$.  It turns out that,
with $A$ fixed, $e_i(A,S)$ is completely determined by its value  
on the singleton sets $S=\{p_1\},\{p_2\},\{p_3\}$.  
Also, there are only five possibilities
for the values of $e_i(A,S)$ on those singletons.
For $i=1,2$, the following collection of predicates $\{\pred^i_0,\ldots,\pred^i_4\}$
on $A\subseteq E_i$, captures those five possibilities.
\begin{enumerate}[$\mathrm{P}$1.]
\setcounter{enumi}{-1}
\item $\pred^i_0(A)$ iff $e_i(A,\{p_1\})=e_i(A,\{p_2\})=e_i(A,\{p_3\})=0$ (in which case, 
  $e_i(A,S)=0$ for all $S$).\label{none}
\item $\pred^i_1(A)$ iff $e_i(A,\{p_1\})=0$ and $e_i(A,\{p_2\})=e_i(A,\{p_3\})=1$ (in which case, 
  $e_i(A,S)=1$ when $|S|\geq2$).\label{one}
\item $\pred^i_2(A)$ iff $e_i(A,\{p_2\})=0$ and $e_i(A,\{p_1\})=e_i(A,\{p_3\})=1$ (in which case, 
  $e_i(A,S)=1$ when $|S|\geq2$).\label{two}
\item $\pred^i_3(A)$ iff $e_i(A,\{p_3\})=0$ and $e_i(A,\{p_1\})=e_i(A,\{p_2\})=1$ (in which case, 
  $e_i(A,S)=1$ when $|S|\geq2$).\label{three}
\item $\pred^i_4(A)$ iff $e_i(A,\{p_1\})=e_i(A,\{p_2\})=e_i(A,\{p_3\})=1$ (in which case, 
  $e_i(A,S)=2$ when $|S|\geq2$).\label{last}
\end{enumerate}
Of course, in all cases, $e_i(A,\emptyset)=0$.

First observe that it is not 
possible that exactly one of $e_i(A,\{p_1\})$, $e_i(A,\{p_2\})$ and $e_i(A,\{p_3\})$
to take the value~1, since this would imply that
two members of~$T$ are dependent on~$A$, and hence the third would be.
Thus, P\ref{none}--P\ref{last} are exhaustive as well as mutually exclusive.

We still need to verify the additional information provided in parentheses.
In case P\ref{none}, $e_i(A,S)=0$ for all $S$, since
all of the elements in~$T$ are dependent on~$A$.
In case P\ref{one} (P\ref{two} and P\ref{three} are symmetrical) $p_1$ is
dependent on~$A$, and so $e_i(A,\{p_1,p_2\})=
e_i(A,\{p_2\})=1$;  then $e_i(A,T)=e_i(A,\{p_1,p_2\})=1$, since $p_3$ depends
on $p_1$ and~$p_2$.
Similarly, $e_i(A,T)=e_i(A,\{p_2,p_3\})$, since $p_1$ depends
on $p_2$ and~$p_3$, so $e_i(A,\{p_2,p_3\})=1$.
The final parenthetical claim (in P\ref{last}) is slightly trickier, and relies
on the fact that the matroids we are working with are binary.  (There
is a simple counterexample based on the uniform matroid $U_{4,2}$ for the
claim in general.)  So consider the elements of $A$ 
as columns of the representing matrix (vectors
over $\mathrm{GF}(2)$), 
and similarly consider the elements of $S$ as columns.
We must rule out the possibility that $e_i(A, S) =1$
for some 2-element subset $S$, say $S=\{p_1,p_2\}$.
So suppose $e_i(A,\{p_1\}) = e_i(A,\{p_2\}) = e_i(A, \{p_1,p_2\}) = 1$.
Then $p_2$ is dependent on $A \cup \{p_1\}$, i.e.,
so (also viewing $p_1$ and $p_2$ as vectors over $\mathrm{GF}(2)$), 
$p_2 = p_1 + \sum_{e\in B} e$, for some $B\subseteq A$.
(Note that $p_1$ must be included in this expression, otherwise $p_2$
would be dependent on~$A$.)
Since $T$ is a circuit we know that $p_3 = p_1 + p_2$.
Now write $p_3 = p_1 + p_2 = p_1 + p_1 + \sum_{e \in B} e=\sum_{e \in B} e$. 
Thus $p_3$ is dependent on $A$, and $e_i(A,p_3)=0$, contrary to what we assumed.

By analogy with the proof of Lemma~\ref{lem:2sumsplit} 
define $\zhat^i=(\zhat^i_0,\ldots,\zhat^i_4)^\tr$ for $i=1,2$ by
$$
\zhat^i_k=\sum_{A:A\subseteq E_i,\pred^i_k(A)}\gamma_Aq^{-r_{\calM_i}(A)}
$$
and 
\begin{align*}
z^i&=\big(\Zdw{\calM_i}{\bgamma_i},\Zow{\calM_i}{\bgamma_i},\\
&\qquad\Ztw{\calM_i}{\bgamma_i},\Zthw{\calM_i}{\bgamma_i},\Zcw{\calM_i}{\bgamma_i}\big)^\tr.
\end{align*}
Note that the dimension here is one greater 
(five instead of four) than in the statement of the lemma,
but this extra dimension will disappear towards the end of the proof.
As before, we can re-express the \lhs\ of (\ref{eq:3sumsplit}) in terms of $\zhat^1,\zhat^2$.

Our first step will be to show that the value $c(A,B)$ depends on the
predicates $\pred^1_j(A)$ and $\pred^2_k(B)$ but does not depend otherwise on~$A$ and~$B$.
In other words, there exists a well-defined
$5\times5$ matrix $C$ with $C_{j,k}=q^{-c(A,B)}$, for $0\leq j,k\leq 4$,
where $A\subseteq E_1$ and $B\subseteq E_2$ are 
any sets satisfying $\pred^1_j(A)$ and $\pred^2_k(B)$.
Specifically, using Lemma~\ref{lem:rankOfAUB}, we will verify that
$$C =
\begin{pmatrix}
q^2 & q & q& q& 1\\
q & q & 1 & 1 & 1\\
q & 1 & q & 1 & 1\\
q & 1 & 1 & q & 1\\
1 & 1 & 1 & 1 & 1 
\end{pmatrix}
= \begin{pmatrix}
4 & 2& 2& 2& 1\\
2 & 2 & 1 & 1 & 1\\
2 & 1 & 2 & 1 & 1\\
2 & 1 & 1 & 2 & 1\\
1 & 1 & 1 & 1 & 1 
\end{pmatrix}.
$$

Consider the first row of the matrix~$C$.
Here, $\pred^1_0(A)$ holds, so that
$e_1(A,\{p_1\})=e_1(A,\{p_2\})=e_1(A,\{p_3\})=0$, and $e_1(A,T)=0$. 
Arguing as in Lemma~\ref{lem:2sumsplit}, 
there exist cycles $C_1,C_1',C_1''$ in $\calC(\calM_1\rest (A\cup T))$ 
satisfying $C_1\cap T=\{p_1\}$, $C_1'\cap T=\{p_2\}$ and $C_1''\cap T=\{p_3\}$. 
Since $\Pi_T(V)$ is a vector space it contains all linear combinations 
(via addition in $\mathrm{GF}(2)$) of these three, and hence $\dim\Pi_T(V)=3$.  
Substituting into (\ref{eq:key3}) we obtain
$$
c(A,B)=e_1(A,T)+e_2(B,T)+\dim\Pi_T(V)-5=e_2(B,T)-2,
$$
from which the first row (and, by symmetry, column) of $C$ can be read off.  

At the other extreme, consider the fifth and final row.  
Here $\pred^1_4(A)$ holds, so that
$e_1(A,\{p_1\})=e_1(A,\{p_2\})=e_1(A,\{p_3\})=1$, and $e_1(A,T)=2$. 
It follows that, for any $C_1\in\calC(\calM_1\rest (A\cup T))$, either
$C_1\cap T=\emptyset$ or $C_1\cap T=T$.  (One cannot have $C_1\cap T=\{p_1,p_2\}$,
for example, as that would imply the existence of a cycle $C'_1=C_1\oplus T$ with 
$C_1'\cap T=\{p_3\}$.)
In the second column, $\pred^2_1(B)$ holds, so 
$e_2(B,\{p_1\})=0$, 
$e_2(B,\{p_2\})=
e_2(B,\{p_3\})=1$ 
and 
$e_2(B,T)=1$. 
There are four possibilities for a cycle
$C_2\in\calC(\calM_2\rest (B\cup T))$, namely 
$C_2\cap T$ is $\emptyset$, $\{p_1\}$, $\{p_2,p_3\}$ or~$T$.  Thus $\dim\Pi_T(V)=2$,
and 
$$c(A,B)=e_1(A,T)+e_2(B,T)+\dim\Pi_T(V)-5=2+1+2-5=0.$$
A similar calculation is valid for the third and fourth columns.  In the final column, 
$\pred^2_4(B)$ 
holds, and $C_2\cap T$ is either $\emptyset$ or~$T$.
Thus $\dim\Pi_T(V)=1$ and 
$$c(A,B)=e_1(A,T)+e_2(B,T)+\dim\Pi_T(V)-5=2+2+1-5=0.$$
This deals with the final row (and, by symmetry, column) of~$C$.

It only remains to check the centre $3\times3$ submatrix of~$C$.  
There are two cases, diagonal and off-diagonal, and they can be 
verified using very similar arguments to the ones we used for the
boundary elements of the matrix.
In conclusion, the \lhs\ of (\ref{eq:3sumsplit}) may be expressed as
\begin{equation}\label{eq:lhs3sumsplit}
\Ztilde(\calM_1\deltasum\calM_2;\bgamma)=(\zhat^1)^\mathrm{T}C\zhat^2.
\end{equation}

Also as before, there is a linear relationship between $z^i$ and $\zhat^i$, expressed by
\begin{equation}
z^i=V\zhat^i,\label{eq:coordChange}
\end{equation}
where 
$$V =
\begin{pmatrix}
1 & 1 & 1 & 1 & 1\\
q & q & 1 & 1 & 1\\
q & 1 & q & 1 & 1\\
q & 1 & 1 & q & 1\\
q^2 & q & q & q & 1
\end{pmatrix}=
\begin{pmatrix}
1 & 1 & 1 & 1 & 1\\
2 & 2 & 1 & 1 & 1\\
2 & 1 & 2 & 1 & 1\\
2 & 1 & 1 & 2 & 1\\
4 & 2 & 2 & 2 & 1
\end{pmatrix}.
$$
It is straightforward to verify correctness of $V$ row by row.
For example, the third row expresses the easily checked fact that
$$\Ztw{\calM_i}{\bgamma_i}=q\zhat^i_0+\zhat^i_1+q\zhat^i_2+\zhat^i_3+\zhat^i_4.$$

Identity (\ref{eq:coordChange})
allows us to rewrite the \rhs\ of (\ref{eq:3sumsplit})
in terms of $\zhat^1$ and~$\zhat^2$.  Let $D'$ be the $5\times5$ matrix
obtained from $D$ in the statement of the theorem by padding it out to the
right and below by a single column and row of zeros.  Then the \rhs\
of (\ref{eq:3sumsplit}) may be written as
\begin{equation}\label{eq:rhs3sumsplit}
(z^1)^\tr D z^2=(V \zhat^1)^\tr D'V\zhat^2 =(\zhat^1)^\tr (V^\tr D'V)\zhat^2.
\end{equation}
Comparing (\ref{eq:lhs3sumsplit}) and (\ref{eq:rhs3sumsplit}), we see that
to complete the proof, we just need to verify the matrix equation
\begin{equation}\label{eq:matrixeq}
C=V^\tr D'V,
\end{equation}
with $C$ and $V$ specialised to $q=2$, and this is easily
done.  
\end{proof}

\begin{remark}
With $q$ regarded as an indeterminate, Equation~(\ref{eq:matrixeq})
has a unique solution in which $D'$ is a matrix of full rank, whose entries
are rational functions of~$q$.  When $q$ is specialised to~2, 
the rank of matrix~$C$ drops to~4, allowing a solution in which 
$D'$ also has rank~4.  When $q=2$, 
there is flexibility in the choice of the matrix~$D'$, which we 
exploit in order to reduce the dimension by~1.  
This is the reason that the \rhs\ of (\ref{eq:3sumsplit}) is of dimension~4,
whereas one would expect dimension~5 a priori.  This apparent accident
is crucial to the design of the algorithm.
\end{remark}

 \section{Signatures}
The goal of this section is to show that, for every weighted binary
weighted matroid $(\calM,\bgamma)$ with distinguished element~$p$, there is a small
(2-element) weighted matroid $(\Itwo,\bdelta)$ that is equivalent to $\calM$ in the
following sense:  if we replace $(\calM,\bgamma)$ 
by $(\Itwo,\bdelta)$ in the context of any 2-sum,
the Tutte polynomial (specialised to $q=2$) of the 2-sum is changed by a factor that
is independent of the context.  
Moreover, the weighted matroid $(\Itwo,\bdelta)$ can readily be computed given 
a ``signature'' of $\calM$.
There is also a 6-element weighted matroid $(\Ithree,\bdelta)$
that does a similar job for 3-sums.
 
To make this precise, let $\Itwo$ be the matroid with a 2-element ground set 
$\{p,e\}$ that form a 2-circuit.  
$\Itwo$ can be viewed as the cycle matroid of a graph consisting of two parallel edges, $p$ and $e$.
 
\begin{lemma}\label{lem:simsig2} Suppose $(\calM,\bgamma)$ is a weighted binary matroid with a distinguished element~$p$
that is not a loop.
Then there is a   weight $d\geq0$ such that, for 
every weight function $\bdelta$ with $\delta_e = d$,
$$\frac{\Ztilde(\Itwo/p;\bdelta)}{\Ztilde(\Itwo\backslash p;\bdelta)}=
\frac{\Ztilde(\calM/p;\bgamma)}{\Ztilde(\calM\backslash p;\bgamma)}.$$ 
The value~$d$ 
can be computed from
$\Ztilde(\calM\backslash p;\bgamma)$ and
$\Ztilde(\calM/p;\bgamma)$ --- it does not otherwise depend upon~$\calM$ or~$\bgamma$.
\end{lemma}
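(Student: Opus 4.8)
The plan is to compute the left-hand side exactly as a function of the single weight $d=\delta_e$ and then solve for $d$. Since $\{p,e\}$ is a $2$-circuit of $\Itwo$, the matroid $\Itwo\backslash p$ is a single coloop~$e$ and the matroid $\Itwo/p$ is a single loop~$e$; unwinding definition~(\ref{eq:TutteDef}) at $q=2$ gives
$$\Ztilde(\Itwo\backslash p;\bdelta)=1+d/2\qquad\text{and}\qquad\Ztilde(\Itwo/p;\bdelta)=1+d .$$
Both expressions involve $\bdelta$ only through $d=\delta_e$, because $p$ has been deleted or contracted, which is exactly why the statement is free to quantify over all $\bdelta$ with $\delta_e=d$. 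Writing $\mu=\Ztilde(\calM/p;\bgamma)\big/\Ztilde(\calM\backslash p;\bgamma)$, the required identity reads $(1+d)/(1+d/2)=\mu$, and solving this linear equation yields the candidate value
$$d=\frac{2(\mu-1)}{2-\mu},$$
which manifestly depends only on $\Ztilde(\calM\backslash p;\bgamma)$ and $\Ztilde(\calM/p;\bgamma)$, giving the last sentence of the lemma. It remains to check that this $d$ is a well-defined non-negative real and that it really does solve the equation.

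For well-definedness, the $A=\emptyset$ term of~(\ref{eq:TutteDef}) contributes $1$, so $\Ztilde(\calM\backslash p;\bgamma)\ge 1>0$ and $\mu$ is defined. To control $\mu$ I would repeat the short computation underlying~(\ref{eq:z0z1ineqs}), now with $\calM$ in the role of $\calM_i$: setting $e(A,\{p\})=r_\calM(A\cup\{p\})-r_\calM(A)\in\{0,1\}$ and $\zhat_j=\sum_{A\subseteq E(\calM)-p,\ e(A,\{p\})=j}\gamma_A q^{-r_\calM(A)}$ for $j\in\{0,1\}$, one gets at $q=2$ that $\Ztilde(\calM\backslash p;\bgamma)=\zhat_0+\zhat_1$ and $\Ztilde(\calM/p;\bgamma)=2\zhat_0+\zhat_1$. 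Non-negativity of the weights gives $\zhat_0\ge 0$, and since $p$ is not a loop we have $e(\emptyset,\{p\})=r_\calM(\{p\})=1$, so the empty set contributes to $\zhat_1$ and hence $\zhat_1\ge 1>0$. Therefore $\Ztilde(\calM\backslash p;\bgamma)\le\Ztilde(\calM/p;\bgamma)<2\,\Ztilde(\calM\backslash p;\bgamma)$, i.e.\ $1\le\mu<2$. Feeding this range into the displayed formula shows the numerator is $\ge 0$ and the denominator is $>0$, so $d$ is finite and $d\ge 0$; substituting it back into $(1+d)/(1+d/2)$ recovers $\mu$, which completes the argument.

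The one step that carries any weight is the bound $\mu<2$, equivalently $\zhat_1>0$: this is precisely where the hypothesis that $p$ is not a loop enters, and without it $\mu$ could equal $2$ and no finite $d$ would work. Everything else is a mechanical evaluation of the Tutte polynomial on a two-element matroid followed by the solution of a single linear equation, so I do not anticipate further difficulties.
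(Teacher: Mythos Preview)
Your proof is correct and follows essentially the same approach as the paper: compute $\Ztilde(\Itwo/p;\bdelta)=1+d$ and $\Ztilde(\Itwo\backslash p;\bdelta)=1+d/2$, solve the resulting linear equation to get $d=2(\mu-1)/(2-\mu)$ (equivalently $d=2(z_1-z_0)/(2z_0-z_1)$), and invoke the bound $1\le\mu<2$ from~(\ref{eq:z0z1ineqs}) to ensure $d\ge0$. The only difference is cosmetic: the paper simply cites~(\ref{eq:z0z1ineqs}), whereas you re-derive it here in terms of $\zhat_0,\zhat_1$.
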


\begin{proof} 
Let $z_0=\Ztilde(\calM\backslash p;\bgamma)$ and $z_1=\Ztilde(\calM/p;\bgamma)$, and set $d=2(z_1-z_0)/(2z_0-z_1)$.
Inequality~(\ref{eq:z0z1ineqs}) implies that the numerator is non-negative, 
and the denominator is strictly positive.
Then
$$\Ztilde(\Itwo/p;\bdelta)/\Ztilde(\Itwo\backslash p;\bdelta)=(1+d)/(1+\tfrac12d)=z_1/z_0=
\Ztilde(\calM/p;\bgamma)/\Ztilde(\calM\backslash p;\bgamma).$$
\end{proof}

\begin{corollary}
\label{cor:2sumexact}
Suppose  $(\calM_1,\bgamma_1)$ and $(\calM_2,\bgamma_2)$ are weighted binary 
matroids with $E(\calM_1)\cap E(\calM_2)=\{p\}$,
where $p$ is not a loop in either $\calM_1$ or~$\calM_2$. 
Then there is a   weight $d\geq0$ such that, for 
every weight function $\bdelta$ with $\delta_e = d$,
$$\Ztilde(\calM_1\deltasum\calM_2;\bgamma_1\deltasum\bgamma_2)=\easycompute 
\Ztilde(\calM_1\deltasum\Itwo;\bgamma_1\deltasum\bdelta),$$
where $\easycompute = 2(2+d)^{-1}\Ztilde(\calM_2\backslash p;\bgamma_2)$.
The value~$d$ can be computed from
$\Ztilde(\calM_2\backslash p;\bgamma_2)$ and
$\Ztilde(\calM_2/p;\bgamma_2)$ --- it does not otherwise depend upon~$(\calM_1,\bgamma_1)$ or $(\calM_2,\bgamma_2)$.
\end{corollary}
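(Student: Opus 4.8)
The plan is to reduce the corollary to Lemma~\ref{lem:2sumsplit} and Lemma~\ref{lem:simsig2}. First I would apply Lemma~\ref{lem:2sumsplit} to the delta-sum $\calM_1\deltasum\calM_2$, obtaining
$$\Ztilde(\calM_1\deltasum\calM_2;\bgamma_1\deltasum\bgamma_2)=(z^1)^\tr D z^2,$$
where $z^i=(\Ztilde(\calM_i\backslash p;\bgamma_i),\Ztilde(\calM_i/p;\bgamma_i))^\tr$ and $D=\begin{pmatrix}2&-1\\-1&1\end{pmatrix}$ is the matrix appearing in that lemma. The point is that the hypotheses of Lemma~\ref{lem:2sumsplit} are equally met when $\calM_2$ is replaced by $\Itwo$ (after renaming the non-$p$ element $e$ of $\Itwo$ so that $E(\calM_1)\cap E(\Itwo)=\{p\}$): $p$ is not a loop of $\Itwo$, because $\{p,e\}$ is a $2$-circuit, so $\{p\}$ is independent. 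Hence also
$$\Ztilde(\calM_1\deltasum\Itwo;\bgamma_1\deltasum\bdelta)=(z^1)^\tr D w,\qquad w=(\Ztilde(\Itwo\backslash p;\bdelta),\Ztilde(\Itwo/p;\bdelta))^\tr,$$
and crucially the \emph{same} vector $z^1$ occurs in both identities, since $\calM_1$ and $\bgamma_1$ are untouched by the substitution.

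Next I would compute $w$ explicitly: deleting $p$ from $\Itwo$ leaves $e$ as a coloop, so $\Ztilde(\Itwo\backslash p;\bdelta)=1+\tfrac12 d$, while contracting $p$ makes $e$ a loop, so $\Ztilde(\Itwo/p;\bdelta)=1+d$; thus $w=(1+\tfrac12 d,\,1+d)^\tr$, which depends on $\bdelta$ only through $\delta_e=d$. Now apply Lemma~\ref{lem:simsig2} to the weighted binary matroid $(\calM_2,\bgamma_2)$ with distinguished element $p$ (not a loop, by hypothesis): it yields a weight $d\ge0$, computable from $z^2_0:=\Ztilde(\calM_2\backslash p;\bgamma_2)$ and $z^2_1:=\Ztilde(\calM_2/p;\bgamma_2)$ alone, with $\Ztilde(\Itwo/p;\bdelta)/\Ztilde(\Itwo\backslash p;\bdelta)=z^2_1/z^2_0$, i.e.\ $w_1/w_0=z^2_1/z^2_0$. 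Since $w_0=1+\tfrac12 d>0$ and $z^2_0>0$ (both contain the $A=\emptyset$ term and all other terms are non-negative), the vectors $z^2$ and $w$ are honestly proportional: $z^2=\easycompute\, w$ with $\easycompute=z^2_0/w_0=2(2+d)^{-1}\Ztilde(\calM_2\backslash p;\bgamma_2)$, matching the stated value of~$\easycompute$.

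Finally, substituting $z^2=\easycompute\, w$ into the first identity gives
$$\Ztilde(\calM_1\deltasum\calM_2;\bgamma_1\deltasum\bgamma_2)=(z^1)^\tr D(\easycompute\, w)=\easycompute\,(z^1)^\tr D w=\easycompute\,\Ztilde(\calM_1\deltasum\Itwo;\bgamma_1\deltasum\bdelta),$$
and the fact that $d$ (hence $\easycompute$) depends only on $\Ztilde(\calM_2\backslash p;\bgamma_2)$ and $\Ztilde(\calM_2/p;\bgamma_2)$ is inherited from Lemma~\ref{lem:simsig2}. The only points needing care are verifying that Lemma~\ref{lem:2sumsplit} genuinely applies to $\calM_1\deltasum\Itwo$ (the loop condition on $p$ in $\Itwo$ and disjointness of ground sets) and that $z^2$ and $w$ are equal as vectors rather than merely equal in ratio; both become immediate once $w$ is written out. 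I do not expect a substantial obstacle here — the corollary is essentially a repackaging of Lemmas~\ref{lem:2sumsplit} and~\ref{lem:simsig2}, with the (invertible) matrix $D$ letting us pass from a rank-one identity of bilinear forms to the vector identity $z^2=\easycompute\, w$.
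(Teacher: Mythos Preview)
Your proposal is correct and follows essentially the same approach as the paper: both arguments apply Lemma~\ref{lem:2sumsplit} twice (to $\calM_1\deltasum\calM_2$ and to $\calM_1\deltasum\Itwo$), use Lemma~\ref{lem:simsig2} to match the ratios $z^2_1/z^2_0$ and $w_1/w_0$, and then read off $\easycompute=z^2_0/w_0$. You are a bit more explicit than the paper in computing $w=(1+\tfrac12 d,\,1+d)^\tr$ and in checking that $p$ is not a loop in $\Itwo$, but the structure of the argument is identical.
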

 
\begin{proof} 
Let $s = \Ztilde(\calM_2/p;\bgamma_2) / \Ztilde(\calM_2\backslash p;\bgamma_2)$.
By Lemma~\ref{lem:2sumsplit},
$$\Ztilde(\calM_1\deltasum\calM_2;\bgamma_1 \deltasum \bgamma_2)=
(\Ztilde(\calM_1\backslash p;\bgamma_1),\Ztilde(\calM_1/p;\bgamma_1))
\begin{pmatrix}2&-1\\-1&1\end{pmatrix}
\begin{pmatrix} 1\\  s\end{pmatrix}\Ztilde(\calM_2\backslash p;\bgamma_2).$$
Similarly, using Lemma~\ref{lem:simsig2},
$$\Ztilde(\calM_1\deltasum \Itwo;\bgamma_1 \deltasum \bdelta)=
(\Ztilde(\calM_1\backslash p;\bgamma_1),\Ztilde(\calM_1/p;\bgamma_1))
\begin{pmatrix}2&-1\\-1&1\end{pmatrix}
\begin{pmatrix} 1\\  s\end{pmatrix}\Ztilde(\Itwo\backslash p;\bdelta).$$
Thus, $\easycompute = \Ztilde(\calM_2\backslash p;\bgamma_2) / \Ztilde(\Itwo\backslash p;\bdelta)$.
\end{proof}
 
It is crucial in Corollary~\ref{cor:2sumexact} that $\easycompute$ does not depend
at all on~$(\calM_1,\bgamma_1)$.
As we shall see presently, an analogous result holds for 3-sums, though the
calculations are more involved.
For a weighted binary matroid $(\calM,\bgamma)$ with distinguished elements $T=\{p_1,p_2,p_3\}$,
the {\it signature\/} of $(\calM,\bgamma)$ with respect to~$T$ is the vector
$$
\sigma(\calM;T,\bgamma)=\frac{
(\Zow{\calM}{\bgamma},
   \Ztw{\calM}{\bgamma},\Zthw{\calM}{\bgamma})   
   }{\Zdw{\calM}{\bgamma}}.
$$
What we are seeking in the 3-sum case is a small matroid whose signature
is equal to that of a given binary matroid~$\calM$.  Such a matroid will
be equivalent to $\calM$ in the context of any 3-sum.   
Before constructing such a matroid we   give two technical lemmas
that investigate inequalities between the Tutte polynomial of various minors 
of a binary matroid. These inequalities will restrict the domain of possible signatures that can occur.

\begin{lemma}\label{lem:ineqs1}
Suppose that $(\calM,\bgamma)$ is a weighted binary matroid 
with distinguished elements $T=\{p_1,p_2,p_3\}$,
and that $T$ is a circuit in $\calM$.  Let 
\begin{align*}
z^\tr&=(z_0,z_1,z_2,z_3,z_4)\\
&=\big(\Zdw{\calM}{\bgamma},\Zow{\calM}{\bgamma},\\
&\qquad\Ztw{\calM}{\bgamma},\Zthw{\calM}{\bgamma},\Zcw{\calM}{\bgamma}\big).
\end{align*}
Then the following (in)equalities hold:
(i)~$z_0>0$,
(ii)~$z_0\leq z_1,z_2,z_3\leq 2z_0$, (iii)~$\tfrac12z_4<z_1,z_2,z_3\leq z_4$
and (iv)~$z_1+z_2+z_3=2z_0+z_4$.
\end{lemma}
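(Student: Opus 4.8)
The plan is to reuse, for the single matroid $\calM$, the coordinate change built in the proof of Lemma~\ref{lem:3sumsplit}. Write $E = E(\calM)\setminus T$, set $q=2$, and define $\hat z = (\hat z_0,\ldots,\hat z_4)^\tr$ by $\hat z_k = \sum_{A\subseteq E:\ \pred_k(A)} \gamma_A\, q^{-r_\calM(A)}$, using the same mutually exclusive and exhaustive predicates $\pred_0,\ldots,\pred_4$ as there (the superscript $i$ now being irrelevant). That proof establishes the linear relation $z = V\hat z$, with $V$ the $5\times 5$ matrix displayed there; spelled out,
\begin{align*}
z_0 &= \hat z_0 + \hat z_1 + \hat z_2 + \hat z_3 + \hat z_4,\\
z_j &= 2\hat z_0 + \hat z_1 + \hat z_2 + \hat z_3 + \hat z_4 + \hat z_j \qquad (j\in\{1,2,3\}),\\
z_4 &= 4\hat z_0 + 2\hat z_1 + 2\hat z_2 + 2\hat z_3 + \hat z_4.
\end{align*}
All four parts will simply be read off from these identities.

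Two elementary facts drive the whole argument. First, every $\hat z_k\ge 0$, because the weights $\gamma_e$ are non-negative. Second, $\hat z_4 > 0$: the subset $A=\emptyset$ contributes the term $\gamma_\emptyset q^{0}=1$, and $\emptyset$ falls into the class $\pred_4$, because $T$ is a circuit of size three, so no $p_j$ is a loop, whence $e(\emptyset,\{p_j\}) = r_\calM(\{p_j\}) = 1$ for $j=1,2,3$. (Both observations are already present inside the proof of Lemma~\ref{lem:3sumsplit}.) These are exactly what is needed to upgrade the weak inequalities to the stated strict ones.

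The rest is bookkeeping. Part~(i): $z_0 = \sum_k \hat z_k \ge \hat z_4 > 0$. Part~(iv): summing the three middle formulas gives $6\hat z_0 + 4(\hat z_1+\hat z_2+\hat z_3) + 3\hat z_4$, which equals $2z_0 + z_4$. Part~(ii): for $j\in\{1,2,3\}$, $z_j - z_0 = \hat z_0 + \hat z_j \ge 0$, while $2z_0 - z_j = \hat z_4 + \sum_{k\in\{1,2,3\}\setminus\{j\}}\hat z_k \ge \hat z_4 > 0$. Part~(iii): $z_4 - z_j = 2\hat z_0 + \sum_{k\in\{1,2,3\}\setminus\{j\}}\hat z_k \ge 0$, and $z_j - \tfrac12 z_4 = \hat z_j + \tfrac12\hat z_4 \ge \tfrac12\hat z_4 > 0$. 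There is no genuine obstacle here: the only step needing a moment's thought is the strict positivity of $\hat z_4$, i.e.\ that the empty set lies in the $\pred_4$ class precisely because $T$ is a circuit — this is what forces the strict inequalities in (i)--(iii).
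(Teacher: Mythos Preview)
Your proof is correct and follows essentially the same approach as the paper: both rely on the coordinate change $z=V\hat z$ from the proof of Lemma~\ref{lem:3sumsplit}, the non-negativity of the $\hat z_k$, and the strict positivity of $\hat z_4$ coming from the term $A=\emptyset$. The paper phrases (ii)--(iv) as coordinatewise inequalities among the rows of~$V$, whereas you expand the entries explicitly, but the content is identical.
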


\begin{proof}
Inequality (i) follows from the definition (\ref{eq:TutteDef}) since the contribution of $A=\emptyset$ is~$1$ and the contribution
of every other~$A$ is non-negative.
Consider identity (\ref{eq:coordChange}).  Since $\zhat\geq0$ (coordinatewise),
the vector~$z$ in the statement of the lemma is in the cone generated by the 
columns of~$V$ (i.e., $z$ is a non-negative linear combination of columns).  
Denote the rows of $V$ by $V_0,\ldots,V_4$.  Then, interpreting all vector
inequalities coordinatewise, (ii) is a consequence
of $V_0\leq V_1,V_2,V_3\leq 2V_0$  (iii), with non-strict inequality, of 
$\frac12 V_4\leq V_1,V_2,V_3\leq V_4$, and (iv) of $V_1+V_2+V_3=2V_0+V_4$.
We know that $\zhat_4>0$, since the weights are non-negative and its expansion contains at least one non-zero 
term, namely the one corresponding to $A=\emptyset$;  this implies that the
first inequality in~(iii) must be strict.
\end{proof}

\begin{lemma}\label{lem:ineqs2}
Under the same assumptions as Lemma~\ref{lem:ineqs1}, $z_1z_2,z_1z_3,z_2z_3\leq z_0z_4$.
\end{lemma}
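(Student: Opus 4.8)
The plan is to pass to the coordinates $\zhat=(\zhat_0,\dots,\zhat_4)$ introduced in the proof of Lemma~\ref{lem:3sumsplit} (read for the single matroid $\calM$). Writing $E'=E(\calM)\setminus T$ and, for $0\le k\le4$, $\Phi_k=\{A\subseteq E':\pred_k(A)\}$ for the five predicate classes, we have $\zhat_k=\sum_{A\in\Phi_k}f(A)$, where $f(A):=\gamma_A\,q^{-r_\calM(A)}\ge0$, and $z=V\zhat$ with $V$ as in~(\ref{eq:coordChange}). A short computation with this coordinate change (or, equivalently, substituting $z_4=z_1+z_2+z_3-2z_0$ from Lemma~\ref{lem:ineqs1}(iv)) gives
\[
z_0z_4-z_1z_2 \;=\; \zhat_3\,(z_0+\zhat_0)\;+\;\bigl(\zhat_0\zhat_4-\zhat_1\zhat_2\bigr).
\]
Since $\zhat\ge0$ (coordinatewise) and $z_0\ge0$, the first summand is non-negative, so the whole problem reduces to the single ``core'' inequality $\zhat_1\zhat_2\le\zhat_0\zhat_4$; the companion inequalities $z_1z_3\le z_0z_4$ and $z_2z_3\le z_0z_4$ then follow by the evident $S_3$-symmetry permuting $p_1,p_2,p_3$ (and hence $\zhat_1,\zhat_2,\zhat_3$).

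To prove $\zhat_1\zhat_2\le\zhat_0\zhat_4$ I would invoke the Ahlswede--Daykin four functions theorem (see, e.g., \cite{BolBook}) on the Boolean lattice $2^{E'}$, with all four functions equal to $f$. Its hypothesis is that $f$ be \emph{log-supermodular}, i.e.\ $f(A)f(B)\le f(A\cup B)f(A\cap B)$ for all $A,B\subseteq E'$, and this is immediate here: $\gamma$ is multiplicative, so $\gamma_A\gamma_B=\gamma_{A\cup B}\gamma_{A\cap B}$, and matroid rank is submodular, so that $q^{-r_\calM(A)-r_\calM(B)}\le q^{-r_\calM(A\cup B)-r_\calM(A\cap B)}$ (using $q=2>1$). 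Applied to the two families $\Phi_1$ and $\Phi_2$, the theorem yields
\[
\zhat_1\zhat_2 \;=\; \Bigl(\sum_{A\in\Phi_1}f(A)\Bigr)\Bigl(\sum_{B\in\Phi_2}f(B)\Bigr)\;\le\;\Bigl(\sum_{C\in\Phi_1\vee\Phi_2}f(C)\Bigr)\Bigl(\sum_{D\in\Phi_1\wedge\Phi_2}f(D)\Bigr),
\]
where $\Phi_1\vee\Phi_2=\{A\cup B:A\in\Phi_1,\,B\in\Phi_2\}$ and $\Phi_1\wedge\Phi_2=\{A\cap B:A\in\Phi_1,\,B\in\Phi_2\}$.

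The remaining step, and the only place the hypothesis that $\calM$ is \emph{binary} enters, is the pair of containments $\Phi_1\vee\Phi_2\subseteq\Phi_0$ and $\Phi_1\wedge\Phi_2\subseteq\Phi_4$. I would use that $\pred_k(A)$ records precisely which of $p_1,p_2,p_3$ lie in $\mathrm{cl}_\calM(A)$ --- all three for $\pred_0$, none for $\pred_4$, and $p_1$ alone for $\pred_1$ (symmetrically for $\pred_2,\pred_3$). Given $A\in\Phi_1$ and $B\in\Phi_2$: $p_1\in\mathrm{cl}(A)\subseteq\mathrm{cl}(A\cup B)$ and $p_2\in\mathrm{cl}(B)\subseteq\mathrm{cl}(A\cup B)$; and because $T$ is a circuit of a binary matroid, $p_3=p_1+p_2$ over $\mathrm{GF}(2)$, so $p_3\in\mathrm{cl}(A\cup B)$ as well --- hence $A\cup B\in\Phi_0$. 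Dually, $\mathrm{cl}(A\cap B)\subseteq\mathrm{cl}(A)$ excludes $p_2,p_3$ while $\mathrm{cl}(A\cap B)\subseteq\mathrm{cl}(B)$ excludes $p_1$, so $A\cap B\in\Phi_4$. Combining these with $f\ge0$, the two bracketed sums above are at most $\sum_{C\in\Phi_0}f(C)=\zhat_0$ and $\sum_{D\in\Phi_4}f(D)=\zhat_4$ respectively, which gives the core inequality and completes the proof.

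The part I expect to be the real obstacle --- or at least the thing one must notice --- is that, in contrast to Lemma~\ref{lem:ineqs1}, this inequality is \emph{not} a consequence of the coordinatewise non-negativity of $\zhat$ and the matrix $V$ alone: expanding $z_0z_4-z_1z_2$ leaves a genuinely signed cross term $-\zhat_1\zhat_2$, so some correlation inequality is unavoidable. Recognising that the Ahlswede--Daykin theorem is the right instrument, applied to the log-supermodular weight $f(A)=\gamma_A q^{-r_\calM(A)}$, is the crux; afterwards, verifying log-supermodularity and the two class containments is routine, the latter being the only point at which binariness of $\calM$ is used.
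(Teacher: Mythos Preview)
Your argument is correct, but it takes a somewhat different route from the paper's. Both proofs rest on the log-supermodularity of the random-cluster weight $f(A)=\gamma_Aq^{-r_\calM(A)}$ (equivalently, submodularity of rank together with $q\ge1$), and both pass to the $\zhat$-coordinates. The paper, however, applies the FKG inequality directly to the two \emph{increasing} indicator functions $f=1-e(A,\{p_1\})$ and $g=1-e(A,\{p_2\})$, obtaining the correlation inequality $\zhat_0\,\zhat\ge(\zhat_0+\zhat_1)(\zhat_0+\zhat_2)$, which after the same algebraic simplification is exactly what is needed. You instead invoke the full Ahlswede--Daykin theorem on the non-monotone classes $\Phi_1,\Phi_2$ to get the sharper inequality $\zhat_1\zhat_2\le\zhat_0\zhat_4$. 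Your route yields a slightly stronger core estimate at the price of a heavier hammer; the paper's route stays within classical FKG on up-sets but needs a bit more cancellation in the algebra.

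One small inaccuracy: the containments $\Phi_1\vee\Phi_2\subseteq\Phi_0$ and $\Phi_1\wedge\Phi_2\subseteq\Phi_4$ do not actually require $\calM$ to be binary. In any matroid in which $T$ is a circuit one has $p_3\in\mathrm{cl}(\{p_1,p_2\})$, so your union argument goes through without appealing to ``$p_3=p_1+p_2$ over $\mathrm{GF}(2)$''; and the intersection argument uses only monotonicity of closure. (Binariness is genuinely used elsewhere --- in the five-way case analysis of Lemma~\ref{lem:3sumsplit} --- but not here.) This does not affect the validity of your proof.
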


\begin{proof}
Let the ground set of $\calM$ be $E\cup T$ where
where $E\cap T = \emptyset$.
Recall the earlier definition
$$
\zhat_k=\sum_{A:A\subseteq E,\pred_k(A)}\gamma_A q^{-r_\calM(A)}, \quad\text{for $0\leq k\leq4$},
$$ 
and recall also that the the predicates $\pred_k$ are exhaustive and
mutually exclusive, so that 
$\Zdw{\calM}{\bgamma} = \zhat_0+\zhat_1+ \cdots +\zhat_4$.  
For brevity, write $\zhat=\Zdw{\calM}{\bgamma}$.

The rank function $r_\calM$ of a matroid is submodular, 
so $-r_\calM$ is supermodular
and the probability distribution $\mu$ 
(defined by $\mu(A)=q^{-r_\calM(A)}
\gamma_A
/\Ztilde(\calM \backslash T, \bgamma)$)
associated with the random cluster
model satisfies the condition for Fortuin-Kasteleyn-Ginibre (FKG)
inequality~\cite[Section~19, Theorem 5]{BolBook}, provided $q \geq 1$.  
Let $f = 1-e(A,\{p_1\})$ and $g = 1-e(A,\{p_2\})$. 
Note that these are both monotonically increasing, as $A$ grows.
(This can either be seen directly, or as a consequence of submodularity of~$r_\calM$.)
So the quantities $f$ and $g$ are positively correlated, i.e.,
$$\Pr\nolimits_\mu(fg=1)=\Ex_\mu(fg)\geq \Ex_\mu(f)\Ex_\mu(g)
   =\Pr\nolimits_\mu(f=1)\Pr\nolimits_\mu(g=1),$$
which is equivalent to 
$$
\frac{\zhat_0}{\zhat} \geq \frac{\zhat_0 + \zhat_1}{\zhat} \times \frac{\zhat_0 + \zhat_2}{\zhat},
$$
which in turn is equivalent to
\begin{equation}\label{eq:FKGconsequence}
   \zhat_0\zhat \geq   (\zhat_0 + \zhat_1)(\zhat_0 + \zhat_2).
\end{equation}

Consider the first inequality we are required to establish, namely 
$z_0z_4\geq z_1z_2$.  
Using the linear transformation (\ref{eq:coordChange}), we
may express this as an equivalent inequality in terms of~$\zhat_k$:
$$
 \zhat(4\zhat_0 + 2(\zhat_1 + \zhat_2 + \zhat_3) + \zhat_4)  
    \geq  (2\zhat_0 + 2\zhat_1 + \zhat_2 + \zhat_3 + \zhat_4)
    (2\zhat_0 + \zhat_1 + 2 \zhat_2 + \zhat_3 + \zhat_4),
$$
where the four linear factors may be read off from the appropriate
rows (first, last, second and third, respectively) of the matrix~$V$.
Applying the definition of~$\zhat$ we obtain the equivalent inequality 
$$
\zhat(\zhat + 3\zhat_0+\zhat_1 + \zhat_2 + \zhat_3)  \geq  
   (\zhat + \zhat_0 + \zhat_1)(\zhat + \zhat_0 + \zhat_2),
$$
which further simplifies, through cancellation, to
\begin{equation}\label{eq:veq''}
 \zhat_0\zhat + \zhat_3 \zhat  \geq  (\zhat_0 + \zhat_1)(\zhat_0 + \zhat_2).    
\end{equation}
Now (\ref{eq:FKGconsequence}) implies (\ref{eq:veq''}), and we are done,
since the other two advertised inequalities follow by symmetry.
\end{proof}

Denote by $\Ithree$ the 6-element matroid
with ground set $\{p_1,p_2,p_3,e_1,e_2,e_3\}$, 
whose circuit space  
is generated by the circuits $\{p_1,e_1\}$, $\{p_2,e_2\}$, $\{p_3,e_3\}$, 
and $\{p_1,p_2,p_3\}$.   The matroid $\Ithree$ can be thought of as the cycle matroid of
a certain graph, namely, the graph with  parallel pairs of edges 
$\{p_1,e_1\}$, $\{p_2,e_2\}$ and $\{p_3,e_3\}$
in which edges $p_1$, $p_2$ and $p_3$ form a length-3 cycle in the graph.

Let $T=\{p_1,p_2,p_3\}$.
We start by showing that, as long as a signature $(s_1,s_2,s_3)$ satisfies
certain equations, which Lemmas~\ref{lem:ineqs1} and~\ref{lem:ineqs2} will guarantee,
then it is straightforward to compute a weighting~$\bdelta$ so that the weighted matroid $(\Ithree,\bdelta)$ has
signature $\sigma(\Ithree;T,\bgamma)=(s_1,s_2,s_3)$.

\begin{lemma}\label{lem:simsig3}
Suppose $s_1$, $s_2$ and $s_3$ satisfy
\begin{equation}\label{eq:sig1}
2+s_1-s_2-s_3>0,\quad
2-s_1+s_2-s_3>0,\quad
2-s_1-s_2+s_3>0 ,\end{equation} 
\begin{equation}\label{eq:sig2}
s_1+s_2+s_3-3\geq0, 
\end{equation}
\begin{equation}\label{eq:sig3}
s_1+s_2+s_3-s_2s_3-2\geq0,\quad
s_1+s_2+s_3-s_1s_3-2\geq0, \mbox{ and}\quad
s_1+s_2+s_3-s_1s_2-2\geq0;
\end{equation}
then there are non-negative weights $d_1$, $d_2$ and $d_3$
such that, for any weight 
function $\bdelta$ with $\delta_{e_1}=d_1$, 
$\delta_{e_2} = d_2$ and $\delta_{e_3} = d_3$,
$\sigma(\Ithree;T,\bdelta)=(s_1,s_2,s_3)$.
The values~$d_1$, $d_2$ and~$d_3$ can be computed  from $s_1$, $s_2$ and $s_3$.
\end{lemma}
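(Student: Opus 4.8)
The plan is to compute the signature of the weighted matroid $(\Ithree,\bdelta)$ explicitly as a function of the three weights $d_i := \delta_{e_i}$, and then to invert that map, verifying along the way that the hypotheses \eqref{eq:sig1}--\eqref{eq:sig3} are exactly what is needed for a non-negative solution to exist.

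First I would read off the relevant minors of $\Ithree$ from its graph description. Deleting $T=\{p_1,p_2,p_3\}$ leaves the triangle on $\{e_1,e_2,e_3\}$, so $\Ithree\backslash T$ is the $3$-element circuit on $\{e_1,e_2,e_3\}$; contracting $p_i$ and deleting the other two turns $e_i$ into a loop and leaves $\{e_j,e_k\}$ as a parallel pair, so $\Ithree/p_i\backslash p_j,p_k$ is a direct sum of a loop (on $e_i$) and a $2$-circuit (on $\{e_j,e_k\}$). Evaluating the multivariate Tutte polynomial at $q=2$ over these small matroids is routine; the clarifying move is the substitution $a_i := 1+d_i$, after which one obtains the symmetric formulas
\[
\Zdw{\Ithree}{\bdelta}=\tfrac14\bigl(a_1+a_2+a_3+a_1a_2a_3\bigr),\qquad
\Zow{\Ithree}{\bdelta}=\tfrac12\bigl(a_1a_2a_3+a_1\bigr),
\]
with the analogues for $\Ztw{\Ithree}{\bdelta}$ and $\Zthw{\Ithree}{\bdelta}$ obtained by permuting indices. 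Since each $a_i>0$, the denominator $\Zdw{\Ithree}{\bdelta}$ is positive, so the signature is well defined, and its $i$-th component is
\[
s_i=\frac{2\bigl(a_1a_2a_3+a_i\bigr)}{a_1+a_2+a_3+a_1a_2a_3}.
\]

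Next I would invert this relation. Put $\tau:=s_1+s_2+s_3$ and $c_i:=2+s_i-s_j-s_k$ (the very quantities occurring in \eqref{eq:sig1}); note that $c_1+c_2+c_3=6-\tau$ and $\tau-2+c_i=2s_i$. Define $a_i:=\tfrac14 Dc_i$, where $D:=4\sqrt{(\tau-2)/(c_1c_2c_3)}$. Then \eqref{eq:sig1} makes $c_1,c_2,c_3$ positive and \eqref{eq:sig2} makes $\tau-2\geq1>0$, so $D$ is a well-defined positive real and the $a_i$ are positive. A short computation gives $a_1a_2a_3=\tfrac14D(\tau-2)$ and $a_1+a_2+a_3=\tfrac14D(6-\tau)$, hence $a_1+a_2+a_3+a_1a_2a_3=D$; plugging back into the formula for $s_i$ and using $\tau-2+c_i=2s_i$ returns $s_i$, so $\sigma(\Ithree;T,\bdelta)=(s_1,s_2,s_3)$. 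Setting $d_i:=a_i-1$ then gives the desired weights, computed explicitly from $s_1,s_2,s_3$.

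The one step requiring care --- and the main obstacle --- is checking that $d_i\geq0$, i.e.\ $a_i\geq1$, i.e.\ $Dc_i\geq4$. Squaring (all quantities involved are positive) and cancelling a factor $c_i$, this is equivalent to $c_i(\tau-2)\geq c_jc_k$. Each side is a difference of squares, and expanding yields the identity
\[
c_i(\tau-2)-c_jc_k=4\bigl(s_i+s_j+s_k-s_js_k-2\bigr),
\]
so $d_i\geq0$ holds precisely by the corresponding inequality of \eqref{eq:sig3}. Thus \eqref{eq:sig1} is exactly what keeps $D$ and the $a_i$ meaningful, \eqref{eq:sig2} is exactly $\tau>2$, and \eqref{eq:sig3} is exactly $d_i\geq0$, which finishes the argument.
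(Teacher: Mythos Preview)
Your proof is correct and is essentially the same as the paper's. Your $c_i$, $\tau-2$, and $a_i$ are exactly the paper's $S_i$, $R$, and $1+d_i$; the paper writes $d_i=-1+\sqrt{RS_i/S_jS_k}$ directly and then verifies the signature, whereas you first compute the signature in the variables $a_i$ and then invert, but the formulas and the key identity $c_i(\tau-2)-c_jc_k=4(s_1+s_2+s_3-s_js_k-2)$ are identical.
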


\begin{proof} 
  
Define 
\begin{align*}
S_1&=2+s_1-s_2-s_3\\
S_2&=2-s_1+s_2-s_3\\
S_3&=2-s_1-s_2+s_3\\
R&=s_1+s_2+s_3-2.
\end{align*}

Define the weights $d_1,d_2,d_3$ for~$\bdelta$ as follows:
\begin{align*}
d_1&=-1+\sqrt{RS_1/S_2S_3}\\
d_2&=-1+\sqrt{RS_2/S_1S_3}\\
d_3&=-1+\sqrt{RS_3/S_1S_2},
\end{align*}
 
By inequalities (\ref{eq:sig1}) and (\ref{eq:sig2}), $R$, $S_1$, $S_2$ and $S_3$  are all strictly positive.
Thus 
$d_1,d_2,d_3$ are well defined.
Finally
$$
RS_1-S_2S_3=4(s_1+s_2+s_3-s_2s_3-2)\geq0,
$$
where the inequality is~(\ref{eq:sig3}).  Thus $d_1$, and hence,
by symmetry, $d_2$ and $d_3$, are all non-negative. 

Let $Y = d_1 d_2 + d_1 d_3 + d_2 d_3 + d_1 d_2 d_3$,
and note that 
\begin{align*}
\Zdw{\Ithree}{\bdelta}&= q^{-0} + q^{-1}(d_1 + d_2 + d_3) + q^{-2} Y, \mbox{ and}\\
\Zow{\Ithree}{\bdelta}&= q^{-0}(1+d_1) + q^{-1}(d_2+d_3 + Y).
\end{align*}
Substituting for $d_1,d_2,d_3$ in these expressions, 
using the helpful identity
$$Y+d_1+d_2+d_3+1= (1+d_1)(1+d_2)(1+d_3)=R\sqrt{R/S_1S_2S_3},$$
we obtain
\begin{align*}
\Zdw{\Ithree}{\bdelta}&=\tfrac14(R+S_1+S_2+S_3)\sqrt{R/S_1S_2S_3}=\sqrt{R/S_1S_2S_3},\\
\noalign{\noindent\text{and}}
\Zow{\Ithree}{\bdelta}&=\tfrac12(R+S_1)\sqrt{R/S_1S_2S_3}=s_1\sqrt{R/S_1S_2S_3},
\end{align*}
and hence $\Zow{\Ithree}{\bdelta}/\Zdw{\Ithree}{\bdelta}=s_1$.  By symmetry, similar
identities hold for the other components of the signature of~$(\Ithree,\bdelta)$.
Summarising, $\sigma(\Ithree;T,\bdelta)=(s_1,s_2,s_3)$,
as desired.
\end{proof}

Temporarily leaving aside the issue of approximation, the way that we will use Lemma~\ref{lem:simsig3} is
captured in the following corollary, which is analogous to Corollary~\ref{cor:2sumexact}.

\begin{corollary}\label{cor:3sum}
Suppose that $(\calM_1,\bgamma_1)$ and $(\calM_2,\bgamma_2)$ are weighted
binary matroids with 
$E(\calM_1)\cap E(\calM_2)=T=\{p_1,p_2,p_3\}$, and suppose 
also that $T$ is a circuit in both $\calM_1$ and $\calM_2$.  
Let $(s_1,s_2,s_3) = \sigma(\calM_2;T,\bgamma_2)$.
Then there are non-negative weights $d_1$, $d_2$ and $d_3$
such that, for any weight 
function $\bdelta$ with $\delta_{e_1}=d_1$, 
$\delta_{e_2} = d_2$ and $\delta_{e_3} = d_3$,
$$
\Ztilde(\calM_1\deltasum\calM_2;\bgamma_1 \deltasum\bgamma_2)=\easycompute \Ztilde(\calM_1 \deltasum \Ithree;\bgamma_1 \deltasum \bdelta),
$$ 
where $$\easycompute=\Zdw{\calM_2}{\bgamma_2}/\Zdw{\Ithree}{\bdelta}=
\Zdw{\calM_2}{\bgamma_2} / \sqrt{R/S_1S_2S_3},$$
in the notation of the proof of Lemma~\ref{lem:simsig3}.
The values~$d_1$, $d_2$ and~$d_3$ can be computed  from $s_1$, $s_2$ and $s_3$ --- they do
not otherwise depend upon $(\calM_1,\bgamma_1)$ or $(\calM_2,\bgamma_2)$.
Moreover, the values~$R$, $S_1$, $S_2$ and $S_3$ are byproducts of this computation.
\end{corollary}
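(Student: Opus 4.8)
The plan is to deduce the corollary from two applications of Lemma~\ref{lem:3sumsplit} together with one application of Lemma~\ref{lem:simsig3}, the latter applied to the signature of~$\calM_2$.

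\emph{Step 1: the signature of $\calM_2$ satisfies the hypotheses of Lemma~\ref{lem:simsig3}.} Write $z=(z_0,z_1,z_2,z_3,z_4)$ for the vector of Tutte polynomials of the five minors of~$\calM_2$ as in Lemma~\ref{lem:ineqs1} (which applies since $T$ is a circuit in~$\calM_2$). Since $z_0>0$ by Lemma~\ref{lem:ineqs1}(i), we have $s_j=z_j/z_0$ for $j=1,2,3$. Using the identity $z_1+z_2+z_3=2z_0+z_4$ of Lemma~\ref{lem:ineqs1}(iv), a short computation gives $2+s_1-s_2-s_3=(2z_1-z_4)/z_0$, which is positive by the strict inequality $\tfrac12 z_4<z_1$ of Lemma~\ref{lem:ineqs1}(iii); the other two inequalities of~(\ref{eq:sig1}) follow by symmetry, so~(\ref{eq:sig1}) holds. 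Similarly $s_1+s_2+s_3-3=(z_4-z_0)/z_0\geq0$ by parts (ii) and (iii) of Lemma~\ref{lem:ineqs1}, giving~(\ref{eq:sig2}); and $s_1+s_2+s_3-s_2s_3-2=(z_0z_4-z_2z_3)/z_0^2\geq0$ by Lemma~\ref{lem:ineqs2}, with the symmetric variants handled the same way, giving~(\ref{eq:sig3}).

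\emph{Step 2: produce the weights.} Apply Lemma~\ref{lem:simsig3} to $(s_1,s_2,s_3)$. This yields non-negative weights $d_1,d_2,d_3$ (computed from $s_1,s_2,s_3$), together with the quantities $R,S_1,S_2,S_3$ as byproducts, such that $\sigma(\Ithree;T,\bdelta)=(s_1,s_2,s_3)$ for any $\bdelta$ with $\delta_{e_i}=d_i$, and moreover $\Zdw{\Ithree}{\bdelta}=\sqrt{R/S_1S_2S_3}$. Note that $T$ is a circuit of $\Ithree$ by construction and $E(\calM_1)\cap E(\Ithree)=T$, so Lemma~\ref{lem:3sumsplit} applies both to $\calM_1\deltasum\calM_2$ and to $\calM_1\deltasum\Ithree$.

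\emph{Step 3: compare via Lemma~\ref{lem:3sumsplit}.} Let $z^1=(\Zdw{\calM_1}{\bgamma_1},\Zow{\calM_1}{\bgamma_1},\Ztw{\calM_1}{\bgamma_1},\Zthw{\calM_1}{\bgamma_1})^\tr$ be the $4$-vector associated with~$\calM_1$. Lemma~\ref{lem:3sumsplit} gives $\Ztilde(\calM_1\deltasum\calM_2;\bgamma_1\deltasum\bgamma_2)=(z^1)^\tr Dz^2$ and $\Ztilde(\calM_1\deltasum\Ithree;\bgamma_1\deltasum\bdelta)=(z^1)^\tr Dw$, where, by the definition of the signature and by Step~2 respectively, $z^2=\Zdw{\calM_2}{\bgamma_2}\,(1,s_1,s_2,s_3)^\tr$ and $w=\Zdw{\Ithree}{\bdelta}\,(1,s_1,s_2,s_3)^\tr$. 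Hence $z^2=\easycompute\,w$ with $\easycompute=\Zdw{\calM_2}{\bgamma_2}/\Zdw{\Ithree}{\bdelta}=\Zdw{\calM_2}{\bgamma_2}/\sqrt{R/S_1S_2S_3}$, and by linearity in the second argument $\Ztilde(\calM_1\deltasum\calM_2;\bgamma_1\deltasum\bgamma_2)=\easycompute\,\Ztilde(\calM_1\deltasum\Ithree;\bgamma_1\deltasum\bdelta)$. The computability claims are then immediate: $d_1,d_2,d_3$ and $R,S_1,S_2,S_3$ emerge from Lemma~\ref{lem:simsig3} as functions of $(s_1,s_2,s_3)$ alone, and $\easycompute$ needs in addition only $\Zdw{\calM_2}{\bgamma_2}$; none of this involves $(\calM_1,\bgamma_1)$. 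The argument is essentially bookkeeping once Lemmas~\ref{lem:ineqs1}, \ref{lem:ineqs2}, \ref{lem:3sumsplit} and~\ref{lem:simsig3} are in hand; the only step with any real content is the algebraic verification in Step~1 that the signature of~$\calM_2$ lies in the region where Lemma~\ref{lem:simsig3} is applicable, and even that reduces to the two inequality lemmas after clearing denominators.
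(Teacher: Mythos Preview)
Your proof is correct and follows essentially the same route as the paper's own argument: two applications of Lemma~\ref{lem:3sumsplit} (to $\calM_1\deltasum\calM_2$ and to $\calM_1\deltasum\Ithree$), with Lemma~\ref{lem:simsig3} invoked once the inequalities~(\ref{eq:sig1})--(\ref{eq:sig3}) have been verified via Lemmas~\ref{lem:ineqs1} and~\ref{lem:ineqs2}. Your Step~1 is in fact a bit more explicit than the paper's (you compute, e.g., $2+s_1-s_2-s_3=(2z_1-z_4)/z_0$ directly), but the substance is identical.
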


\begin{proof}
As before, for $i=1,2$, let
$$z^i=\big(\Zdw{\calM_i}{\bgamma_i},\Zow{\calM_i}{\bgamma_i},
\Ztw{\calM_i}{\bgamma_i},\Zthw{\calM_i}{\bgamma_i}\big)^\tr.$$
Let $s = (1,s_1,s_2,s_3)^\tr$.
By Lemma~\ref{lem:3sumsplit},
$$\Ztilde(\calM_1\deltasum\calM_2;\bgamma_1\deltasum \bgamma_2)=(z^1)^\tr Dz^2 =
\Ztilde(\calM_2\backslash T;\bgamma_2) (z^1)^\tr D s.$$
Similarly, if $\sigma(\Ithree;T,\bdelta)=(s_1,s_2,s_3)$, then
$$\Ztilde(\calM_1\deltasum\Ithree;\bgamma_1\deltasum \bdelta)= 
\Ztilde(\Ithree\backslash T;\bdelta) (z^1)^\tr D s.$$

To simplify the notation (and avoid confusing the index ``$2$'' in $z^2$ with an exponent), let $z$ denote the vector $z^2$.
Now, from inequalities (iii) and (iv) of Lemma~\ref{lem:ineqs1},  we see that 
$2(z_0+z_1)> z_1+z_2+z_3$, so Equation~(\ref{eq:sig1}) is satisfied.
By inequality~(ii) of Lemma~\ref{lem:ineqs1}, Equation~(\ref{eq:sig2}) is satisfied.
Finally, from Lemma~\ref{lem:ineqs2} and identity (iv) of Lemma~\ref{lem:ineqs1},
$z_2z_3\leq z_0(z_1+z_2+z_3-2z_0)$, so Equation~(\ref{eq:sig3}) is satisfied. So, by Lemma~\ref{lem:simsig3},
the weights $d_1$, $d_2$ and $d_3$ can be computed and 
$\sigma(\Ithree;T,\bdelta)=(s_1,s_2,s_3)$.
The value of $\Ztilde(\Ithree \backslash T;\bdelta)$ is calculated in the proof of Lemma~\ref{lem:simsig3}.
\end{proof}
 
The problem with using Corollary~\ref{cor:3sum}
to replace the complicated expression $ \Ztilde(\calM_1\deltasum\calM_2;\bgamma_1 \deltasum\bgamma_2)$
with the simpler $\Ztilde(\calM_1 \deltasum \Ithree;\bgamma_1 \deltasum \bdelta)$
is that, in general, we will not be able to compute the necessary values~$s_1$, $s_2$ and~$s_3$.
Instead, we will use our FPRAS recursively to approximate these values.
Thus, we need a version of Corollary~\ref{cor:3sum} that accommodates some approximation error.
Unfortunately, this creates some technical complexities.
We will use the following lemma.

\begin{lemma}\label{lem:approx3sum}
Suppose that $(\calM_1,\bgamma_1)$ and $(\calM_2,\bgamma_2)$ are weighted
binary matroids with 
$E(\calM_1)\cap E(\calM_2)=T=\{p_1,p_2,p_3\}$, and suppose 
also that $T$ is a circuit in both $\calM_1$ and $\calM_2$.  

Suppose that $\epsilon \leq 1$
and that $\smallconst$ is a sufficiently small positive constant ($\smallconst = 1/6000$ will do).
Suppose that, coordinatewise, 
\begin{equation}
\label{eq:appxclose}
e^{-\epsilon \smallconst}\sigma(\calM_2;T,\bgamma_2) \leq (\tilde s_1,\tilde s_2,\tilde s_3) 
\leq e^{\epsilon \smallconst}\sigma(\calM_2;T,\bgamma_2).
\end{equation} 
Then there are non-negative weights $d_1$, $d_2$ and $d_3$
such that, for any weight 
function $\bdelta$ with $\delta_{e_1}=d_1$, 
$\delta_{e_2} = d_2$ and $\delta_{e_3} = d_3$,
$$
e^{-\epsilon}\Ztilde(\calM_1\deltasum\calM_2;\bgamma_1 \deltasum\bgamma_2)
\leq \easycompute \Ztilde(\calM_1 \deltasum \Ithree;\bgamma_1 \deltasum \bdelta)
\leq e^{\epsilon}\Ztilde(\calM_1\deltasum\calM_2;\bgamma_1 \deltasum\bgamma_2),$$
 
where $$\easycompute=
\Zdw{\calM_2}{\bgamma_2} / \sqrt{R/S_1S_2S_3},$$
in the notation of the proof of Lemma~\ref{lem:simsig3}.
The values~$d_1$, $d_2$ and~$d_3$ can be computed  from $\tilde s_1$, $\tilde s_2$ and $\tilde s_3$ --- they do
not otherwise depend upon $(\calM_1,\bgamma_1)$ or $(\calM_2,\bgamma_2)$.
Moreover, the values~$R$, $S_1$, $S_2$ and $S_3$ are byproducts of this computation.
\end{lemma}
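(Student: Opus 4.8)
The plan is to reduce both Tutte polynomials to a single bilinear form and then control how that form reacts to perturbing the signature. Write $(s_1,s_2,s_3)=\sigma(\calM_2;T,\bgamma_2)$ for the true signature of $\calM_2$, set $s=(1,s_1,s_2,s_3)^{\tr}$, and put $z^1=\bigl(\Zdw{\calM_1}{\bgamma_1},\Zow{\calM_1}{\bgamma_1},\Ztw{\calM_1}{\bgamma_1},\Zthw{\calM_1}{\bgamma_1}\bigr)^{\tr}$. Factoring $\Zdw{\calM_2}{\bgamma_2}$ out of the vector $z^2$ in Lemma~\ref{lem:3sumsplit} --- exactly as in the proof of Corollary~\ref{cor:3sum} --- gives $\Ztilde(\calM_1\deltasum\calM_2;\bgamma_1\deltasum\bgamma_2)=\Zdw{\calM_2}{\bgamma_2}\,(z^1)^{\tr}Ds$. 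Suppose for the moment that from $(\tilde s_1,\tilde s_2,\tilde s_3)$ alone we can manufacture a triple $(\hat s_1,\hat s_2,\hat s_3)$ satisfying the hypotheses (\ref{eq:sig1})--(\ref{eq:sig3}) of Lemma~\ref{lem:simsig3}; feeding it to that lemma yields weights $d_1,d_2,d_3\ge0$ with $\sigma(\Ithree;T,\bdelta)=(\hat s_1,\hat s_2,\hat s_3)$, the byproducts $R,S_1,S_2,S_3$, and the value $\Zdw{\Ithree}{\bdelta}=\sqrt{R/S_1S_2S_3}$. The same factoring applied to $\calM_1\deltasum\Ithree$, with $\easycompute=\Zdw{\calM_2}{\bgamma_2}/\Zdw{\Ithree}{\bdelta}$, gives $\easycompute\,\Ztilde(\calM_1\deltasum\Ithree;\bgamma_1\deltasum\bdelta)=\Zdw{\calM_2}{\bgamma_2}\,(z^1)^{\tr}D\hat s$ with $\hat s=(1,\hat s_1,\hat s_2,\hat s_3)^{\tr}$. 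Thus the lemma reduces to showing $(z^1)^{\tr}D\hat s/(z^1)^{\tr}Ds\in[e^{-\epsilon},e^{\epsilon}]$, a statement about the four numbers $z^1_0,\dots,z^1_3$ and the two triples.

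For that step I would use a crude Lipschitz bound. Set $u=Dz^1$, so $u_i=z^1_i-z^1_0$ for $i\in\{1,2,3\}$ and $u_0=4z^1_0-z^1_1-z^1_2-z^1_3$. Applying Lemma~\ref{lem:ineqs1} to $\calM_1$ gives $z^1_0>0$ and $z^1_0\le z^1_i\le2z^1_0$, hence $0\le u_i\le z^1_0$ and $u_1+u_2+u_3\le3z^1_0$; applying it to $\calM_2$ gives $s_i\ge1$, so
\[
(z^1)^{\tr}Ds=u_0+\sum_{i=1}^{3}u_is_i\ \ge\ u_0+\sum_{i=1}^{3}u_i\ =\ z^1_0\ >\ 0 .
\]
Since $(z^1)^{\tr}D\hat s-(z^1)^{\tr}Ds=\sum_{i=1}^{3}u_i(\hat s_i-s_i)$, we get $\bigl|(z^1)^{\tr}D\hat s/(z^1)^{\tr}Ds-1\bigr|\le3\max_i|\hat s_i-s_i|$. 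So it is enough to guarantee $\max_i|\hat s_i-s_i|\le c\,(e^{\epsilon\smallconst}-1)$ for some absolute constant $c$: then, using $e^{\epsilon}-1\ge\epsilon$ and $1-e^{-\epsilon}\ge\epsilon/2$ for $\epsilon\le1$, taking $\smallconst=1/6000$ (which makes $3c(e^{\epsilon\smallconst}-1)<\epsilon/2$) forces the ratio into $[e^{-\epsilon},e^{\epsilon}]$.

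What remains --- and the part I expect to carry the real weight of the proof --- is the correction step: producing $(\hat s_1,\hat s_2,\hat s_3)$ inside the feasible region of Lemma~\ref{lem:simsig3} from $(\tilde s_1,\tilde s_2,\tilde s_3)$ without moving far from $(s_1,s_2,s_3)$. That region, cut out by (\ref{eq:sig1})--(\ref{eq:sig3}), is convex (the first two families are linear, and each function $s_1+s_2+s_3-s_js_k-2$ in (\ref{eq:sig3}) is concave), and it contains $(s_1,s_2,s_3)$ by the argument in Corollary~\ref{cor:3sum} (via Lemmas~\ref{lem:ineqs1} and~\ref{lem:ineqs2}). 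The natural move is to let $(\hat s_1,\hat s_2,\hat s_3)$ be the nearest point of this region to $(\tilde s_1,\tilde s_2,\tilde s_3)$: nearest-point projection onto a closed convex set is $1$-Lipschitz and fixes points already in the set, so $\max_i|\hat s_i-s_i|\le\|\hat s-s\|_2\le\|\tilde s-s\|_2\le\sqrt3\max_i|\tilde s_i-s_i|\le2\sqrt3\,(e^{\epsilon\smallconst}-1)$, using (\ref{eq:appxclose}) together with $s_i\le2$ from Lemma~\ref{lem:ineqs1}.

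The genuine obstacle is that (\ref{eq:sig1}) is a \emph{strict} inequality and the true signature can sit arbitrarily close to its boundary --- one only has $S_j\ge1/\Zdw{\calM_2}{\bgamma_2}$ --- so a careless projection onto the closure can land on a face $S_j=0$, where the formulas of Lemma~\ref{lem:simsig3} degenerate; and (\ref{eq:sig2}), (\ref{eq:sig3}) can likewise be tight. I would therefore split on the margins of (\ref{eq:sig1})--(\ref{eq:sig3}) at the true signature. When all these margins exceed a threshold proportional to $e^{\epsilon\smallconst}-1$, the unmodified $(\tilde s_1,\tilde s_2,\tilde s_3)$ already satisfies (\ref{eq:sig1})--(\ref{eq:sig3}) and no correction is needed; in the complementary case a nearly-tight inequality pins down the shape of $\calM_2$ tightly enough that one can either argue directly or project onto the region with (\ref{eq:sig1}) stiffened to $S_j\ge\theta$, with $\theta$ chosen below the true margins and membership of the true signature checked by hand. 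Calibrating these thresholds against $e^{\epsilon\smallconst}-1$ is where the bookkeeping lives and is what pins $\smallconst$ to a small explicit value; once $\max_i|\hat s_i-s_i|\le c(e^{\epsilon\smallconst}-1)$ is in hand, the estimate of the second paragraph finishes the argument, and the closed forms for $\easycompute$, $R$ and $S_1,S_2,S_3$ are precisely those read off from Lemma~\ref{lem:simsig3} as in Corollary~\ref{cor:3sum}.
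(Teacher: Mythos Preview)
Your overall architecture is the paper's: reduce both sides to $\Zdw{\calM_2}{\bgamma_2}\,(z^1)^{\tr}Dv$ for two nearby vectors~$v$, then (i)~bound how that bilinear form reacts to perturbing~$v$, and (ii)~correct the approximate signature to a feasible input for Lemma~\ref{lem:simsig3}. Your treatment of~(i) is actually cleaner than the paper's Lemma~\ref{lem:YYY}: the identity $u_0+u_1+u_2+u_3=z^1_0$ gives $(z^1)^{\tr}Ds\ge z^1_0$ directly, and the Lipschitz estimate falls out.

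The gap is in~(ii). Your projection argument rests on the assertion that each function $s_1+s_2+s_3-s_js_k-2$ in~(\ref{eq:sig3}) is concave, hence that the feasible region is convex. This is false: the Hessian of $-s_js_k$ has eigenvalues~$\pm1$. Concretely, writing $a_i=s_i-1$, the constraints~(\ref{eq:sig3}) read $a_l\ge a_ja_k$; the points $a=(1,\tfrac12,\tfrac12)$ and $a'=(\tfrac12,1,\tfrac12)$ both satisfy all three, but their midpoint $(\tfrac34,\tfrac34,\tfrac12)$ violates $a_3\ge a_1a_2$. Replacing the~$1$'s by $1-\eta$ for small $\eta>0$ places both points strictly inside the full region (\ref{eq:sig1})--(\ref{eq:sig3}) while the midpoint still fails, so the full feasible region is non-convex as well. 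Without convexity, nearest-point projection is neither single-valued nor $1$-Lipschitz, and the bound $\|\hat s-s\|_2\le\|\tilde s-s\|_2$ is unjustified. The paper does not project at all: its Lemma~\ref{lem:XXX} performs an explicit two-case correction, splitting on whether the two smallest \emph{observed} values $\tilde s_i$ differ by more than~$5\delta$, and verifies (\ref{eq:sig1})--(\ref{eq:sig3}) by hand in each case. Something of that shape is what is actually needed here; note also that your proposed case split on the margins at the \emph{true} signature is on quantities you cannot observe, whereas the paper splits on the observable~$\tilde s_i$.
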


\begin{proof}
Let   
$(r_1,r_2,r_3) = \sigma(\calM_2;T,\bgamma_2)$.
First, using Lemma~\ref{lem:XXX} in the appendix with $\chi=\epsilon \smallconst$, 
we can use $\tilde s_1$, $\tilde s_2$ and $\tilde s_3$ to
compute $s_1$, $s_2$ and $s_3$ satisfying
\begin{equation}
\label{eq:near} e^{-\newbigconst\epsilon \smallconst} s_i \leq  r_i \leq e^{\newbigconst\epsilon\smallconst} s_i,
\end{equation} 
and Equations~(\ref{eq:sig1}), (\ref{eq:sig2}) and (\ref{eq:sig3}).
To see that Lemma~\ref{lem:XXX} applies, note that
Equation~(\ref{E3}) follows from Lemma~\ref{lem:ineqs1} (ii).
Equation~(\ref{E1}) is analogous to Equation~(\ref{eq:sig1}) and is established in the same way as Equation~(\ref{eq:sig1}) is
established in the proof of Corollary~\ref{cor:3sum}. Similarly, Equation~(\ref{E2}) is analogous to Equation~(\ref{eq:sig3}) and is
established as in the proof of Corollary~\ref{cor:3sum}. (It is not clear that $r_1 \leq r_2 \leq r_3$, but this can be thought of as a renaming
inside Lemma~\ref{lem:XXX}. Furthermore, the corresponding assumption $\tilde s_1 \leq \tilde s_2 \leq \tilde s_3$ is without loss of generality,
since, if the assumption does not hold, then these values can be swapped without violating the proximity of $\tilde s_i$ to $r_i$.)

By Lemma~\ref{lem:simsig3}, $s_1$, $s_2$ and $s_3$ can be used to compute
non-negative weights $d_1$, $d_2$ and $d_3$
such that, for any weight 
function $\bdelta$ with $\delta_{e_1}=d_1$, 
$\delta_{e_2} = d_2$ and $\delta_{e_3} = d_3$,
$\sigma(\Ithree;T,\bdelta)=(s_1,s_2,s_3)$. As in the proof of Corollary~\ref{cor:3sum}, this implies
$$\Ztilde(\calM_1\deltasum\Ithree;\bgamma_1\deltasum \bdelta)= 
\Ztilde(\Ithree\backslash T;\bdelta) (z^1)^\tr D s,$$
where $s = (1,s_1,s_2,s_3)^\tr$ and $\Ztilde(\Ithree\backslash T;\bdelta) = \sqrt{R/S_1S_2S_3}$ for some byproducts $R$, 
$S_1$, $S_2$ and $S_3$ of the computation.

Now let $r=(1,r_1,r_2,r_3)^\tr$. Then, as in the proof of Corollary~\ref{cor:3sum}, 
$$\Ztilde(\calM_1\deltasum\calM_2;\bgamma_1\deltasum \bgamma_2)= 
\Ztilde(\calM_2\backslash T;\bgamma_2) (z^1)^\tr D r.$$
Also, since $\newbigconst \smallconst \leq 1/56$,
$e^{-\epsilon /56} s_i \leq  r_i \leq e^{\epsilon/56} s_i$.
The result then follows from Lemma~\ref{lem:YYY} in the appendix, since  
$z$, $s$ and $r$ have positive entries, and satisfy
$1 \leq z_i / z_0 \leq 2$,
$1 \leq s_i / s_0 \leq 2$, and $1 \leq r_i/r_0 \leq 2$ for $i\in\{1,2,3\}$ by Lemma~\ref{lem:ineqs1} (i) and (ii).
\end{proof}

We also need a lemma, similar to Lemma~\ref{lem:approx3sum}, that
is appropriate for 2-sum.

  \begin{lemma}
\label{lem:approx2sum}
Suppose  $(\calM_1,\bgamma_1)$ and $(\calM_2,\bgamma_2)$ are weighted binary 
matroids with $E(\calM_1)\cap E(\calM_2)=\{p\}$,
where $p$ is not a loop in either $\calM_1$ or~$\calM_2$.
Let $z_0 = \Ztilde(\calM_2\backslash p;\bgamma_2)$ and let $z_1 = \Ztilde(\calM_2/p;\bgamma_2)$.
Suppose that $\epsilon \leq 1$ and that $\smallconst$ is a sufficiently small positive constant (as in Lemma~\ref{lem:approx3sum} --- here it
suffices to take $\smallconst \leq 1/160$).
Suppose that $e^{-\epsilon \smallconst} z_i \leq \tilde z_i \leq e^{\epsilon \smallconst} z_i$ for $i\in\{0,1\}$.
Then there is a   weight $d\geq0$ such that, for 
every weight function $\bdelta$ with $\delta_e = d$,
$$e^{-\epsilon}
\Ztilde(\calM_1\deltasum\calM_2;\bgamma_1\deltasum\bgamma_2)
\leq\easycompute \Ztilde(\calM_1\deltasum\Itwo;\bgamma_1\deltasum\bdelta)
\leq e^{\epsilon}\Ztilde(\calM_1\deltasum\calM_2;\bgamma_1\deltasum\bgamma_2),$$
where $\easycompute = 2(2+d)^{-1} z_0$.
The value~$d$ can be computed from $\tilde z_0$ and $\tilde z_1$. It does not otherwise depend upon~$(\calM_1,\bgamma_1)$ or $(\calM_2,\bgamma_2)$.
\end{lemma}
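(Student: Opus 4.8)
The plan is to run the recipe of Corollary~\ref{cor:2sumexact} but with the approximate values $\tilde z_0,\tilde z_1$ in place of the exact $z_0,z_1$, and then to bound how the approximation error propagates through the bilinear form of Lemma~\ref{lem:2sumsplit}. Write $w_0=\Ztilde(\calM_1\backslash p;\bgamma_1)$ and $w_1=\Ztilde(\calM_1/p;\bgamma_1)$. Applying inequality~(\ref{eq:z0z1ineqs}) to $\calM_1$ and to $\calM_2$ (legitimate since $p$ is a loop in neither) gives $0<w_0\le w_1<2w_0$ and $0<z_0\le z_1<2z_0$; these structural inequalities are exactly what make the argument go through.

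First I would clean up the data. As $\tilde z_0,\tilde z_1$ are only within a factor $e^{\pm\epsilon\smallconst}$ of $z_0,z_1$, they need not satisfy $\tilde z_0\le\tilde z_1<2\tilde z_0$. So I would invoke Lemma~\ref{lem:XXX} in the appendix (in its two-coordinate form, with $\chi$ a fixed multiple of $\epsilon\smallconst$) to produce from $\tilde z_0,\tilde z_1$ a pair $s_0,s_1$ with $s_0\le s_1<2s_0$ and $e^{-\newbigconst\epsilon\smallconst}s_i\le z_i\le e^{\newbigconst\epsilon\smallconst}s_i$ for $i\in\{0,1\}$. Setting $d=2(s_1-s_0)/(2s_0-s_1)$, exactly as in the proof of Lemma~\ref{lem:simsig2}, the inequalities on $s_0,s_1$ make $d\ge0$ well defined, and the computation there shows that for any weight function $\bdelta$ with $\delta_e=d$ one has $\Ztilde(\Itwo\backslash p;\bdelta)=1+d/2=(2+d)/2$ and $\Ztilde(\Itwo/p;\bdelta)=1+d=(s_1/s_0)(2+d)/2$. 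Hence $\easycompute=2(2+d)^{-1}z_0=z_0/\Ztilde(\Itwo\backslash p;\bdelta)$, so that $\easycompute\,\Ztilde(\Itwo\backslash p;\bdelta)=z_0$ and $\easycompute\,\Ztilde(\Itwo/p;\bdelta)=z_0\,s_1/s_0$.

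Next I would expand both quantities using Lemma~\ref{lem:2sumsplit}, obtaining, with $D=\begin{pmatrix}2&-1\\-1&1\end{pmatrix}$,
$$\Ztilde(\calM_1\deltasum\calM_2;\bgamma_1\deltasum\bgamma_2)=(w_0,w_1)\,D\begin{pmatrix}z_0\\z_1\end{pmatrix},\qquad \easycompute\,\Ztilde(\calM_1\deltasum\Itwo;\bgamma_1\deltasum\bdelta)=(w_0,w_1)\,D\begin{pmatrix}z_0\\ z_0\,s_1/s_0\end{pmatrix}.$$
So both are values of the bilinear form $x^{\tr}Dx'$ at the common left argument $x=(w_0,w_1)^{\tr}$, and at the two right arguments $x'=(z_0,z_1)^{\tr}$ and $x'=(z_0,z_0 s_1/s_0)^{\tr}$. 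Each of those right arguments has coordinate ratio in $[1,2)$ --- the first since $z_0\le z_1<2z_0$, the second since $s_0\le s_1<2s_0$ --- and the two are coordinatewise within a factor $e^{\pm 2\newbigconst\epsilon\smallconst}$ of one another, since their first coordinates agree while $(s_1/s_0)/(z_1/z_0)\in[e^{-2\newbigconst\epsilon\smallconst},e^{2\newbigconst\epsilon\smallconst}]$ by the proximity bound; the left argument $(w_0,w_1)^{\tr}$ likewise has coordinate ratio in $[1,2)$. The conclusion then follows from Lemma~\ref{lem:YYY} in the appendix, which asserts exactly that a bilinear form of this shape is stable under a multiplicative perturbation of its arguments when those arguments have coordinate ratios bounded this way, provided the perturbation is below an absolute threshold (the hypotheses $\epsilon\le1$ and $\smallconst\le1/160$ keep us below it).

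The one genuinely substantive step is the last one, and the reason it holds is that, since $D$ has negative off-diagonal entries, $x^{\tr}Dx'$ is a difference of positive quantities and could a priori lose all precision to cancellation; the bounds $w_1/w_0,\,x'_1/x'_0\in[1,2)$ forbid this and force $x^{\tr}Dx'=\Theta(w_0x'_0)$ with explicit constants, so a multiplicative error in $x'$ turns into a multiplicative error of comparable size in $x^{\tr}Dx'$. The remaining obstacle is bookkeeping of constants: the clean-up via Lemma~\ref{lem:XXX} inflates the relative error by the factor $\newbigconst$, after which Lemma~\ref{lem:YYY} requires $\newbigconst\epsilon\smallconst$ below its threshold; a direct calculation --- parallel to, but slacker than, the one in the proof of Lemma~\ref{lem:approx3sum}, the extra slack coming from the drop from dimension four to dimension two --- shows $\smallconst\le1/160$ is enough. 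Finally, $d$ is a function of $s_0,s_1$ and hence of $\tilde z_0,\tilde z_1$ alone, and $\easycompute=2(2+d)^{-1}z_0$ depends only on $d$ and on $z_0=\Ztilde(\calM_2\backslash p;\bgamma_2)$, so neither depends otherwise on $(\calM_1,\bgamma_1)$ or $(\calM_2,\bgamma_2)$, completing the proof.
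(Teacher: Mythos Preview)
Your overall strategy is exactly the paper's: clean up the approximate data so that the ratio lies in $[1,2]$, build $d$ from the cleaned pair via the formula in Lemma~\ref{lem:simsig2}, expand both sides with Lemma~\ref{lem:2sumsplit}, and then argue that the $2\times2$ bilinear form $(w_0,w_1)D(\cdot)$ is multiplicatively stable because the coordinate ratios are confined to $[1,2)$. So the architecture is right.

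The gap is in the two lemma citations. Lemma~\ref{lem:XXX} has no ``two-coordinate form'': as stated it is specifically about triples $(r_1,r_2,r_3)$ satisfying the 3-sum signature constraints (\ref{E1})--(\ref{E3}), and its proof is a two-case analysis tailored to those constraints. Likewise Lemma~\ref{lem:YYY} is stated only for vectors in $\mathbb{R}^4$ and the particular $4\times4$ matrix $D$ of Lemma~\ref{lem:3sumsplit}; it does not literally cover the $2\times2$ matrix here. The paper does not invoke either lemma. Instead it carries out both steps by hand: the clean-up is a one-line clipping (set $z'_0=\tilde z_0e^{\epsilon\smallconst}$ and then clamp $\tilde z_1$ to the interval $[z'_0,2z'_0]$, which costs only a factor $e^{\pm4\epsilon\smallconst}$, far better than the $\newbigconst$ you quote), and the stability of the bilinear form is argued directly ``similar to the proof of Lemma~\ref{lem:YYY}'' with the explicit constant $160$. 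Your sketch would become a correct proof if you replaced the two citations by these short direct arguments; as written it appeals to statements that are not available.
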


\begin{proof}

This is similar to the proof of Lemma~\ref{lem:approx3sum}, but easier.
First, observe that $z_0>0$ and $1\leq z_1/z_0 \leq 2$  (Equation (\ref{eq:z0z1ineqs})).
The first step is to use $\tilde z_0$ and $\tilde z_1$ to compute $z'_0$ and $z'_1$
such that $z'_0>0$ and $1\leq z'_1/z'_0 \leq 2$ and $e^{-4 \epsilon \smallconst} z'_i \leq z_i \leq e^{4 \epsilon \smallconst} z'_i$.
This is straightforward. Just set $z'_0 = \tilde z_0 e^{\epsilon \smallconst}$.
If $\tilde z_1 \leq z_0'$ then set $z'_1 = z'_0$.
If $\tilde z_1 \geq 2 z_0'$ then set $z'_1 = 2 z'_0$.
Otherwise, set $z'_1 = \tilde z_1$. (Note that 
$\tfrac{\tilde z_1}{\tilde z_0} \geq e^{-2\epsilon\smallconst} \tfrac{z_1}{z_0} \geq e^{-2\epsilon\smallconst}$, so
if $\tilde z_1 \leq z_0'$ then 
$z'_1 =   e^{\epsilon \smallconst} \tilde z_0 \leq e^{\epsilon \smallconst} e^{2 \epsilon \smallconst} \tilde z_1 \leq
e^{4 \epsilon \smallconst} z_1$. 
Similarly, $\tfrac{\tilde z_1}{\tilde z_0} \leq e^{2\epsilon \smallconst} \tfrac{z_1}{z_0} \leq 2e^{2\epsilon \smallconst}$,
so if $\tilde z_1 \geq 2 z_0'$ then $z'_1 = 2e^{\epsilon \smallconst} \tilde z_0
\geq e^{\epsilon \smallconst}e^{-2 \epsilon \smallconst} \tilde z_1 \geq e^{-2\epsilon\smallconst} z_1$.)
As in the proof of Lemma~\ref{lem:simsig2},  
let $d=2(z'_1-z'_0)/(2z'_0-z'_1)$. Let $s' = z'_1/z'_0$ and note
that, for 
every weight function~$\bdelta$ with $\delta_e = d$,
$\frac{\Ztilde(\Itwo/p;\bdelta)}{\Ztilde(\Itwo\backslash p;\bdelta)}= s'$. Let $s = z_1/z_0$.
As in the proof of Corollary~\ref{cor:2sumexact}, note that
$$\Ztilde(\calM_1\deltasum\calM_2;\bgamma_1 \deltasum \bgamma_2)=
\big(\Ztilde(\calM_1\backslash p;\bgamma_1),\Ztilde(\calM_1/p;\bgamma_1)\big)  \begin{pmatrix}2&-1\\-1&1\end{pmatrix}
\begin{pmatrix} 1\\  s\end{pmatrix} z_0,$$
and
$$\Ztilde(\calM_1\deltasum \Itwo;\bgamma_1 \deltasum \bdelta)=
\big(\Ztilde(\calM_1\backslash p;\bgamma_1),\Ztilde(\calM_1/p;\bgamma_1)\big) \begin{pmatrix}2&-1\\-1&1\end{pmatrix}
\begin{pmatrix} 1\\  s'\end{pmatrix} (1+\tfrac{d}{2}).$$
Since $e^{-8\epsilon \smallconst} s' \leq s \leq e^{8\epsilon \smallconst} s'$, similar to the proof of Lemma~\ref{lem:YYY},
\begin{align*}
e^{-160 \epsilon \smallconst}
\big(\Ztilde(\calM_1\backslash p;\bgamma_1), &\Ztilde(\calM_1/p;\bgamma_1)\big)  \begin{pmatrix}2&-1\\-1&1\end{pmatrix}
\begin{pmatrix} 1\\  s\end{pmatrix}\\
&\leq 
\big(\Ztilde(\calM_1\backslash p;\bgamma_1),\Ztilde(\calM_1/p;\bgamma_1)\big) \begin{pmatrix}2&-1\\-1&1\end{pmatrix}
\begin{pmatrix} 1\\  s'\end{pmatrix}\\
&\leq e^{160 \epsilon \smallconst}
\big(\Ztilde(\calM_1\backslash p;\bgamma_1),\Ztilde(\calM_1/p;\bgamma_1)\big)  \begin{pmatrix}2&-1\\-1&1\end{pmatrix}
\begin{pmatrix} 1\\  s\end{pmatrix}.\end{align*}
\end{proof}

\subsection{Simple Delta-sums}
\label{sec:simplesums}

We complete this section by investigating the (simple) way in which 
the special matroids $\Itwo$ and $\Ithree$ interact with delta-sums 
with $|T|=1$ and $|T|=3$, respectively

Suppose $(\calM,\bgamma)$ is a weighted binary matroid on a ground set $E(\calM)$
of $m$~elements, and with distinguished element~$p$, which is not a loop.
Consider the delta-sum $\calM\deltasum\Itwo$.  The ground set of this 
matroid also has $m$~elements, and it shares all but one element with $E(\calM)$.
Thus, we have a natural 
correspondence between the ground sets of the two matroids.
We claim that under this correspondence the two matroids are the same,
and for this it is enough to verify that they have the same circuit space.
Note that $\{p,e\}$ is the unique non-empty cycle in~$\Itwo$.  
For any cycle $C$ in $\calM$,
exactly one of $C'=C$ or $C'=C\oplus \{p,e\}$ is 
a cycle in $\calM\deltasum\Itwo$.  The mapping $C\mapsto C'$ is invertible, and 
is the required bijection between cycles in $\calM$ and those in
$\calM\deltasum\Itwo$.  Now suppose $\delta_e=d$, and 
and let $\bgamma'$ be derived from $\bgamma$ by assigning $\gamma_p=d$.
Then $\Ztilde(\calM\deltasum\Itwo,\bgamma\deltasum\bdelta)=\Ztilde(\calM,\bgamma')$.

A similar observation applies to $\Ithree$ under delta-sum with $|T|=3$.
Suppose $(\calM,\bgamma)$ is a weighted binary matroid with distinguished elements
$T=\{p_1,p_2,p_3\}$, where $T$ is a cycle in~$\calM$.
Consider the delta-sum $\calM\deltasum\Ithree$.  
As before, there is a natural 
correspondence between the ground sets of the two matroids.
Let $C_1,C_2,C_3$ be the three 2-circuits in $\Ithree$ including 
elements $p_1,p_2,p_3$, respectively.     
Any cycle $C$ in $\calM$ can be transformed in a unique way to a cycle
$C'$ in $\calM\deltasum\Ithree$, by adding 
a subset of circuits from $\{C_1,C_2,C_3\}$.  
The mapping $C\mapsto C'$ is invertible, and is a bijection between cycles 
in $\calM$ and those in $\calM\deltasum\Ithree$.
Now let $\bgamma'$ be derived from $\bgamma$ by assigning 
$\gamma_{p_1}=\delta_{e_1}$, $\gamma_{p_2}=\delta_{e_2}$ and $\gamma_{p_3}=\delta_{e_3}$.
Then $\Ztilde(\calM\deltasum\Ithree,\bgamma\deltasum\bdelta)=\Ztilde(\calM,\bgamma')$.
 
\section{The algorithm}

We now have all the ingredients for the algorithm for estimating 
$\Ztilde(\calM;\bgamma)=\Ztilde(\calM;2,\bgamma)$, 
given a weighted regular matroid~$(\calM,\bgamma)$ and an accuracy parameter~$\epsilon$.  
The base cases for this recursive algorithm are  
when $\calM$ is graphic, cographic or $R_{10}$.  In these cases
we estimate $\Ztilde(\calM;\bgamma)$ ``directly'', which means the 
following.  If $\calM$ is $R_{10}$ then we evaluate $\Ztilde(\calM;\bgamma)$
by brute force.  If $\calM$ is graphic, we form the weighted 
graph $(G,\bgamma)$ whose (weighted) cycle matroid is $(\calM,\bgamma)$.
Then the partition function of the Ising model on $(G,\bgamma)$ may 
be estimated using the algorithm of Jerrum and Sinclair~\cite{JSIsing}.
If $\calM$ is cographic, then its dual $\calM^*$ is graphic, and
$$
\Ztilde(\calM;\bgamma)=\gamma_Eq^{-r_\calM(E)}\Ztilde(\calM^*;\bgamma^*),
$$
where $E=E(\calM)$, and $\bgamma^*$ is the dual weighting  
given by $\gamma^*_e=q/\gamma_e=2/\gamma_e$ for
all $e\in E(\calM)$~\cite[4.14a]{SokalMulti}. 
(Ground set elements~$e$ with $\gamma_e=0$ do not cause any problems, because they can just be deleted.)
Then we proceed as before, 
but using $(\calM^*,\bgamma^*)$ in place of $(\calM,\bgamma)$.
The proposed algorithm is presented as Figure~\ref{fig:alg}.
 
\begin{figure}[t]
  
\begin{framed}
\begin{description}
\item[Step 1] If $\calM$ is graphic, cographic or $R_{10}$ then estimate
$\Ztilde(\calM,\bgamma)$ directly.

\item[Step 2] Otherwise use Seymour's decomposition algorithm 
to express $\calM$ as 
$\calM_1\deltasum\calM_2$, where $\deltasum$ is a 1-, 2- or 3-sum.
Recall that $E(\calM)=E(\calM_1)\oplus E(\calM_2)$.
Let $T=E(\calM_1)\cap E(\calM_2)$, $E_1=E(\calM_1)\setminus T$ and $E_2=E(\calM_2)\setminus T$.
Noting $E(\calM)=E_1\cup E_2$, let $\bgamma_1:E_1\to\R^+$ and $\bgamma_2:E_2\to\R^+$
be the restrictions of $\bgamma$ to $E_1$ and~$E_2$. Assume without loss of generality that $|E(\calM_2)|\leq|E(\calM_1)|$
(otherwise swap their names). 

\item[Step 3] If $\deltasum$ is s 3-sum then let $T=\{p_1,p_2,p_3\}$ (say).  
Execute Steps 4a--7a in Figure~\ref{fig:alg3}.
If $\deltasum$ is a 2-sum then 
let $T=\{p\}$. Execute Steps  4b--7b in Figure~\ref{fig:alg2}.
If $\deltasum$ is a 1-sum then recursively estimate
$\Ztilde(\calM_1;\bgamma_1)$ with accuracy parameter  $\epsilon |\calM_1|/|\calM|$,
and $\Ztilde(\calM_2;\bgamma_2)$  with accuracy parameter $\epsilon |\calM_2|/|\calM|$,
and return the product of the two, which is an estimate of $\Ztilde(\calM,\bgamma)$.

\end{description} 
\end{framed}
 
\caption{Algorithm for estimating the Ising partition function of a regular matroid $\calM$ given
accuracy parameter $\epsilon\leq 1$.
}
\label{fig:alg}
\end{figure}

\begin{figure}[t]
 
\begin{framed}
\begin{description}

\item[Step 4a]  Recusively estimate $z_0=\Zdw{\calM_2}{\bgamma_2}$, $z_1=\Zow{\calM_2}{\bgamma_2}$,
$z_2=\Ztw{\calM_2}{\bgamma_2}$ and $z_3=\Zthw{\calM_2}{\bgamma_2}$
with accuracy parameter
$\epsilon \smallconst |\calM_2|/(4|\calM|)$.

\item[Step 5a] Using Lemma~\ref{lem:approx3sum},
compute $d_1$, $d_2$ and $d_3$ 
such that, for any weight 
function $\bdelta$ with $\delta_{e_1}=d_1$, 
$\delta_{e_2} = d_2$ and $\delta_{e_3} = d_3$,
 $$
e^{-
\epsilon   |\calM_2|/(2|\calM|)
 }
\Ztilde(\calM ;\bgamma )
\leq 
\easycompute \Ztilde(\calM_1 \deltasum \Ithree;\bgamma_1 \deltasum \bdelta) 
\leq 
e^{\epsilon   |\calM_2|/(2|\calM|) }
\Ztilde(\calM ;\bgamma ).$$
Note that our estimate for~$z_0$ gives an estimate  for $\easycompute= z_0/ \sqrt{R/S_1S_2S_3}$ with accuracy parameter
at most $\epsilon |\calM_2|/(2 |\calM|)$. ($R$, $S_1$, $S_2$ and $S_3$ are byproducts of the computation of $d_1$, $d_2$ and $d_3$.)

\item[Step 6a]  Recall from Section~\ref{sec:simplesums}
that  
$\Ztilde(\calM_1\deltasum\Ithree,\bgamma_1 \deltasum \bdelta)  = \Ztilde(\calM_1, \bgamma')$, 
where $\bgamma'$ is derived from $\bgamma_1$ by assigning 
$\gamma'_{p_1}=\delta_{e_1}$, $\gamma'_{p_2}=\delta_{e_2}$ and $\gamma'_{p_3}=\delta_{e_3}$.  

\item[Step 7a] Recursively estimate
$\Ztilde(\calM_1;\bgamma')$ with accuracy parameter
$\epsilon(|\calM| - |\calM_2|)/|\calM|$ and multiply it by the estimate for $\easycompute$ from Step 5a. Return this value, which is an estimate of $\Ztilde(\calM,\bgamma)$.

\end{description}
\end{framed}

\caption{The 3-sum case. 
$\smallconst$ is a sufficiently small positive constant which does not depend upon $\calM$ or $\epsilon$. See Lemmas~\ref{lem:approx3sum}
and~\ref{lem:approx2sum}.
}
\label{fig:alg3}
\end{figure}

Using the guarantees from Lemmas~\ref{lem:approx3sum} and~\ref{lem:approx2sum}, it is easy to see that the algorithm is correct.
That is, given a regular matroid~$\calM$ and an accuracy parameter $\epsilon<1$,
the algorithm returns an estimate $\hat Z$
satisfying $e^{-\epsilon}\Ztilde(\calM,\bgamma) \leq \hat Z \leq e^{\epsilon}\Ztilde(\calM,\bgamma)$.
The rest of this section shows that the running time is at most a polynomial in $|E(\calM)|$ and $\epsilon^{-1}$.

Let $\tTr(m)= O( m^{\aTr})$ be the time complexity of performing the Seymour 
decomposition of an $m$-element matroid (this is our initial splitting step), 
and let $\tJS(m,\epsilon)=O(m^\aJS \epsilon^{-2})$ 
be the time complexity of
estimating the Ising partition function of an $m$-edge graph. 
(From \cite[Theorem 5]{JSIsing}, and the remark following it, we may take
$\aJS=15$.)
Denote by $T(m,\epsilon)$ the time-complexity of the algorithm of Figure~\ref{fig:alg}.
Recall that 
$\smallconst$ is a sufficiently small positive constant which does not depend upon $\calM$ or $\epsilon$ and
is described in  Lemmas~\ref{lem:approx3sum}
and~\ref{lem:approx2sum};  we can take it to be $\smallconst=1/6000$.
The recurrence governing $T(m,\epsilon)$ is now presented, immediately
followed by an explanation of its various components.
\begin{align*}
T(m,\epsilon)&\leq \tTr(m)+\max\Big\{\tJS(m,\epsilon),\\
&\qquad\max_{4\leq k\leq m/2}\big(T(m-k+3,\tfrac{\epsilon(m-k-3)}{m})+4T(k,\tfrac{\epsilon\smallconst(k+3)}{4 m})\big),\\
&\qquad\max_{2\leq k\leq m/2}\big(T(m-k+1,\tfrac{\epsilon(m-k-1)}{m})+2T(k,\tfrac{\epsilon\smallconst(k+1)}{2 m})\big),\\
&\qquad\max_{1\leq k\leq m/2}\big(T(m-k, \tfrac{\epsilon(m-k)}{m})+T(k, \tfrac{\epsilon k}{m})\big)\Big\}.
\end{align*}
The four expressions within the outer maximisation correspond
to the direct case, the 3-sum case, the 2-sum case, and the 1-sum case,  respectively.
The variable $k$ is to be interpreted as the number of ground set elements 
in $\calM$ that come from~$\calM_2$ (and hence $m-k$ is the number that 
come from $\calM_1$).  Thus, in the case of a 3-sum, for example, $|E(\calM_1)|=m-k+3$
and $|E(\calM_2)|=k+3$.  Note that Step~2 of the algorithm ensures $k\leq m/2$.
The lower bounds on~$k$ come from the corresponding lower bounds 
on the size of matroids occurring in 3-sums, 2-sums and 1-sums.

We will demonstrate that $T(m,\epsilon)= O(m^\alpha \epsilon^{-2})$, 
where $\alpha=\max\{\aJS,\aTr+1, 43\}$.
Specifically, we will show, by induction on~$m$,
that $T(m,\epsilon)\leq Cm^\alpha\epsilon^{-2}$, for some 
constant~$C$ and all sufficiently large~$m$.  
In the analysis that follows we do not attempt to obtain the
best possible exponent~$\alpha$ for the running time, instead preferring
to simplify the analysis as much as possible.  It would certainly
be possible to reduce the constant~43 appearing in the formula for the 
exponent, but there seems little point in doing so, as the 
best existing value for $\aJS$ is already too large to make the algorithm 
feasible in practice.

\begin{figure}[t]

\begin{framed}
\begin{description}
\item[Step 4b] Recursively
estimate $z_0 =  \Ztilde(\calM_2\backslash p;\bgamma_2)$ and
$z_1 = \Ztilde(\calM_2/p;\bgamma_2)$ with accuracy parameter $(\epsilon \smallconst |\calM_2|)/(2|\calM|)$.
\item[Step 5b] Using Lemma~\ref{lem:approx2sum}, compute~$d$ such that 
for 
every weight function $\bdelta$ with $\delta_e = d$,
 $$e^{-\epsilon   |\calM_2|/(2|\calM|)}
\Ztilde(\calM ;\bgamma)
\leq\easycompute \Ztilde(\calM_1\deltasum\Itwo;\bgamma_1\deltasum\bdelta)
\leq e^{\epsilon   |\calM_2|/(2|\calM|) }
\Ztilde(\calM;\bgamma). $$
Note that our estimate for $z_0$ gives an estimate for $\easycompute = 2(2+d)^{-1} z_0$
with accuracy parameter
at most $(\epsilon |\calM_2|)/(2 |\calM|)$.

\item[Step 6b]  Recall from Section~\ref{sec:simplesums}
that  $\Ztilde(\calM_1\deltasum\Itwo,\bgamma_1\deltasum\bdelta)=\Ztilde(\calM_1,\bgamma')$, 
where $\bgamma'$ is derived from $\bgamma_1$ by assigning $\gamma'_p=d$.

 \item [Step 7b] Recursively estimate $\Ztilde(\calM_1;\bgamma')$
with accuracy parameter
$\epsilon(|\calM| - |\calM_2|)/|\calM|$ and multiply it by the estimate for $\easycompute$ from Step 5b. Return this value, which is an estimate of $\Ztilde(\calM,\bgamma)$.

\end{description}
\end{framed}

\caption{The 2-sum case.}
\label{fig:alg2}
\end{figure}

Note that by choosing $C$ sufficiently large in the time-bound 
$T(m,\epsilon)\leq Cm^\alpha\epsilon^{-2}$, we may ensure that
this bound on time-complexity bound holds for any $m$ from 
a finite initial segment of the positive integers, 
specifically for $m\in\{1,2,\ldots,19\}$.
For the inductive step,
substitute $T(m,\epsilon)\leq Cm^\alpha\epsilon^{-2}$ into the \rhs{} of 
the above recurrence.
We need to verify 
\begin{align}
Cm^\alpha\epsilon^{-2} &\geq \tTr(m)+\max\Big\{\tJS(m,\epsilon),\label{eq:easy} \\
&\qquad\max_{4\leq k\leq m/2}\big(C(m-k+3)^\alpha {\big(\tfrac{m}{\epsilon(m-k-3)}\big)}^{2} 
  +4Ck^\alpha {\big(\tfrac{4m}{\epsilon\smallconst(k+3)}\big)}^{2} \big),\label{eq:rec3sum}\\
&\qquad\max_{2\leq k\leq m/2}\big(C(m-k+1)^\alpha {\big(\tfrac{m}{\epsilon(m-k-1)}\big)}^{2}
  +2Ck^\alpha {\big(\tfrac{2 m}{\epsilon\smallconst(k+1)}\big)}^{2}\big)\label{eq:rec2sum}\\
&\qquad\max_{1\leq k\leq m/2}\big(C(m-k)^\alpha {\big(\tfrac{m}{\epsilon(m-k)}\big)}^{2}
  +Ck^\alpha {\big(\tfrac{m}{\epsilon k}\big)}^{2}\big)\label{eq:rec1sum}
\Big\},
\end{align}
for $m\geq20$,
as this will imply $T(m,\epsilon)\leq Cm^\alpha \epsilon^{-2}$,
completing the induction step.  Effectively, there are four independent 
inequalities to verify, numbered (\ref{eq:easy})--(\ref{eq:rec1sum}).
The first inequality is immediate (for a sufficiently large constant~$C$),
since we are assuming $\alpha\geq\max\{\aTr,\aJS\}$.

The second inequality, namely (\ref{eq:rec3sum}), requires some work, but the 
remaining two will then follow easily.  First, note the following 
simple estimate:
\begin{equation}\label{eq:vsimple}
\frac{m-k+3}{m-k-3}\leq\frac{m/2+3}{m/2-3}=1+\frac{12}{m-6}\leq1+\frac{20}m,
\end{equation}
where we have assumed that $k\leq m/2$ and $m\geq20$.
The inequality we wish to establish is equivalent, under rearrangement, to
$$
Cm^\alpha\epsilon^{-2}-C(m-k+3)^\alpha {\big(\tfrac{m}{\epsilon(m-k-3)}\big)}^{2} 
  -4Ck^\alpha {\big(\tfrac{4m}{\epsilon\smallconst(k+3)}\big)}^{2} \geq \tTr(m),
$$
for all $k$ with $4\leq k\leq m/2$.  Noting (\ref{eq:vsimple}), it is enough to 
show
\begin{equation}\label{eq:enough}
Cm^2\epsilon^{-2}
   \big[m^{\alpha-2}-(m-k+3)^{\alpha-2}(1+20/m)^2-64k^{\alpha-2}\smallconst^{-2}\big]
   \geq \tTr(m),
\end{equation}
for all $k$ with $4\leq k\leq m/2$.  Regarding the \lhs{} of~(\ref{eq:enough})
as a continuous function of a real variable~$k$, and taking the second derivative
with respect to~$k$, we that the \lhs{} is a concave function of~$k$.
It is enough, then, to check that (\ref{eq:enough}) holds at $k=4$ and $k=m/2$.

When $k=4$, the inequality follows from the following sequence of 
inequalities:
\begin{align}
&Cm^2\epsilon^{-2}\label{eq:k=4}
   \big[m^{\alpha-2}-(m-1)^{\alpha-2}(1+20/m)^2-64\times4^{\alpha-2}\smallconst^{-2}\big]\\
&\qquad\geq Cm^2\epsilon^{-2}\notag
   \big[m^{\alpha-2}-(m-1)^{\alpha-2}(1+1/m)^{40}-4^{\alpha+1}\smallconst^{-2}\big]\\
&\qquad=Cm^2\epsilon^{-2}\notag
   \big[m^{\alpha-2}-m^{\alpha-2}(1-1/m)^{\alpha-2}(1+1/m)^{40}-4^{\alpha+1}\smallconst^{-2}\big]\\
&\qquad\geq Cm^2\epsilon^{-2}\label{eq:alphageq43}
   \big[m^{\alpha-2}-m^{\alpha-2}(1-1/m)-4^{\alpha+1}\smallconst^{-2}\big]\\
&\qquad= Cm^2\epsilon^{-2}\notag
   \big[m^{\alpha-3}-4^{\alpha+1}\smallconst^{-2}\big]\\
&\qquad\geq \tfrac12 Cm^{\alpha-1}\epsilon^{-2}\label{eq:calc}\\
&\qquad\geq \tTr(m),\label{eq:laststep}
\end{align}
where inequality (\ref{eq:alphageq43}) is a consequence of $\alpha\geq43$, 
inequality (\ref{eq:laststep}) of $\alpha\geq\aTr+1$, and  
inequality~(\ref{eq:calc})
comes from a comparison of the two terms, noting $\alpha\geq43$, $\smallconst\geq1/6000$
and $m\geq20$.

When $k=m/2$, inequality (\ref{eq:enough}) is established as follows:
\begin{align}
&Cm^2\epsilon^{-2}\label{eq:k=m/2}
   \big[m^{\alpha-2}-(m/2+3)^{\alpha-2}(1+20/m)^2-(m/2)^{\alpha-2}(8/\smallconst)^2\big]\\
&\qquad= Cm^\alpha\epsilon^{-2}
   \big[1-(\tfrac12+3/m)^{\alpha-2}(1+20/m)^2-(1/2)^{\alpha-2}(8/\smallconst)^2\big]\notag\\
&\qquad\geq Cm^\alpha\epsilon^{-2}
   \big[1-10^{-7}-10^{-2}]\label{eq:calc'}\\
&\qquad\geq\tfrac12 Cm^\alpha\epsilon^{-2}\notag\\
&\qquad\geq \tTr(m),\label{eq:laststep'}
\end{align}
where (\ref{eq:laststep'}) is a consequence of $\alpha\geq\aTr$,
and~(\ref{eq:calc'}) of 
$\alpha\geq43$, $\smallconst\geq1/6000$
and $m\geq20$.

The above calculations may easily be adapted to cover inequalities
(\ref{eq:rec2sum}) and~(\ref{eq:rec1sum}) for the cases of 2-sum and 1-sum.
Since estimate (\ref{eq:vsimple}) applies as well to $(m-k+1)/(m-k-1)$, 
the analogue of (\ref{eq:enough}) in the 2-sum case is
$$
Cm^2\epsilon^{-2}
   \big[m^{\alpha-2}-(m-k+1)^{\alpha-2}(1+20/m)^2-8k^{\alpha-2}\smallconst^{-2}\big]
   \geq \tTr(m).
$$
Again, we need only verify this at the extreme values of~$k$, namely $k=2$ and
$k=m/2$.
The specialisation to $k=2$,
$$
Cm^2\epsilon^{-2}
   \big[m^{\alpha-2}-(m-1)^{\alpha-2}(1+20/m)^2-2^{\alpha+1}\smallconst^{-2}\big]
   \geq \tTr(m),
$$
can be seen by comparison with (\ref{eq:k=4}), and the one for $k=m/2$,
$$
Cm^2\epsilon^{-2}
   \big[m^{\alpha-2}-(m/2+1)^{\alpha-2}(1+20/m)^2-8(m/2)^{\alpha-2}\smallconst^{-2}\big]
   \geq \tTr(m).
$$
by comparison with (\ref{eq:k=m/2}).

Finally, the analogue of (\ref{eq:enough}) in the 1-sum case is
$$
Cm^2\epsilon^{-2}
   \big[m^{\alpha-2}-(m-k)^{\alpha-2}-k^{\alpha-2}\big]
   \geq \tTr(m),
$$
which again needs to be verified at the extreme values $k=1$ and $k=m/2$.
As before, these can be seen by comparison with (\ref{eq:k=4}) and~(\ref{eq:k=m/2}).

\bibliographystyle{plain}
\bibliography{matroidising}

\section{Appendix}

This appendix contains some technical lemmas needed in the proof of Lemma~\ref{lem:approx3sum}.
The lemmas are about approximation, and they don't add any intuition to the paper.

We start with a well-known fact that we will use in the proof of both lemmas. (This follows directly from the series expansion of $e$.)
\begin{observation}
\label{fact}
If $0<\epsilon<1$ then $1+\epsilon \leq e^\epsilon \leq 1+2\epsilon$. 
\end{observation}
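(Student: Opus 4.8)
The plan is to use the power series $e^\epsilon=\sum_{k\ge 0}\epsilon^k/k!$ directly, exactly as the parenthetical hint in the statement suggests. Both inequalities reduce to term-by-term estimates of this series.

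For the lower bound I would simply note that $e^\epsilon = 1+\epsilon+\sum_{k\ge 2}\epsilon^k/k!$ and that every term of the tail sum is positive when $\epsilon>0$; hence $e^\epsilon\ge 1+\epsilon$. For the upper bound the one routine estimate needed is $k!\ge 2^{k-1}$ for every integer $k\ge 1$, which follows by an immediate induction ($k!=k\,(k-1)!\ge 2\cdot 2^{k-2}$ for $k\ge 2$). Using it,
\[
e^\epsilon = 1+\sum_{k\ge 1}\frac{\epsilon^k}{k!}\le 1+\sum_{k\ge 1}\frac{\epsilon^k}{2^{k-1}} = 1+2\sum_{k\ge 1}\Big(\frac{\epsilon}{2}\Big)^k = 1+\frac{2\epsilon}{2-\epsilon},
\]
where the geometric series converges because $0<\epsilon/2<1$. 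Since $0<\epsilon<1$ gives $2-\epsilon>1$, we get $\frac{2\epsilon}{2-\epsilon}<2\epsilon$, hence $e^\epsilon< 1+2\epsilon$, which in particular yields the stated (non-strict) inequality.

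There is no genuine obstacle here; the only point requiring a moment's care is picking a factorial bound strong enough to close the gap all the way to $1+2\epsilon$ — the crude bound $k!\ge 2$ for $k\ge 2$ is too weak as $\epsilon\to 1$, whereas $k!\ge 2^{k-1}$ does the job cleanly. (Alternatively one could invoke the Lagrange form of Taylor's remainder, but this is marginally more delicate near $\epsilon=1$ and the series argument is shorter.)
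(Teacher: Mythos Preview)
Your proof is correct and follows exactly the approach the paper indicates: the observation is stated there without proof, with only the parenthetical remark that it ``follows directly from the series expansion of $e$'', and your argument is a clean execution of that hint. The factorial bound $k!\ge 2^{k-1}$ and the resulting geometric-series estimate are a perfectly appropriate way to make the upper bound precise.
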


Our first lemma refers to the matrix~$D$ defined just before Lemma~\ref{lem:3sumsplit}.

\begin{lemma}
\label{lem:YYY}
Suppose that $z$, $s$ and $r$ are column vectors in $\mathbb{R}^4$ with positive entries satisfying
$1 \leq z_i / z_0 \leq 2$,
$1 \leq s_i / s_0 \leq 2$, and $1 \leq r_i/r_0 \leq 2$ for $i\in\{1,2,3\}$. 
Suppose that $e^{-\epsilon} s_i \leq r_i \leq e^{\epsilon} s_i$ for some $0<\epsilon<1$.
Then $e^{- 56 \epsilon} z^\tr D s \leq z^\tr D r \leq e^{ 56 \epsilon} z^\tr D s$. 
\end{lemma}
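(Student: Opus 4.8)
The plan is to reduce everything to the explicit bilinear expansion of $z^\tr D s$ and then control the effect of replacing $s$ by the multiplicatively close vector $r$. Multiplying out,
$$z^\tr D s = 4z_0 s_0 - z_0(s_1+s_2+s_3) - s_0(z_1+z_2+z_3) + (z_1 s_1 + z_2 s_2 + z_3 s_3),$$
and the first thing I would record is the identity
$$z^\tr D s = z_0 s_0\Big(1 + \sum_{i=1}^3 \big(\tfrac{z_i}{z_0}-1\big)\big(\tfrac{s_i}{s_0}-1\big)\Big),$$
obtained by clearing denominators. Since each factor $\tfrac{z_i}{z_0}-1,\ \tfrac{s_i}{s_0}-1$ lies in $[0,1]$ by hypothesis, every summand lies in $[0,1]$, so $z_0 s_0 \le z^\tr D s \le 4 z_0 s_0$; in particular $z^\tr D s>0$, and by the same argument applied to $r$ in place of $s$, $z^\tr D r \ge z_0 r_0>0$. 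These lower bounds are exactly what is needed to absorb the cancellation that is visible in the displayed expansion.

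Next I would estimate the perturbation $z^\tr D r - z^\tr D s = z^\tr D(r-s)$. Setting $\delta_j = r_j - s_j$, Observation~\ref{fact} gives $|\delta_j| \le (e^\epsilon-1)s_j \le 2\epsilon s_j$ for every $j\in\{0,1,2,3\}$ (for the negative side one uses $1-e^{-\epsilon}\le e^\epsilon-1$). Expanding $z^\tr D(r-s)$ as in the first display, taking absolute values term by term, and inserting $|\delta_j|\le 2\epsilon s_j$ together with the box bounds $z_i \le 2z_0$ and $s_i \le 2s_0$ (for $i\in\{1,2,3\}$), the four groups of terms, namely $4z_0\delta_0$, $-z_0\sum_i\delta_i$, $-\delta_0\sum_i z_i$ and $\sum_i z_i\delta_i$, contribute at most $8\epsilon z_0 s_0$, $12\epsilon z_0 s_0$, $12\epsilon z_0 s_0$ and $24\epsilon z_0 s_0$ respectively. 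Hence
$$|z^\tr D r - z^\tr D s| \le 56\,\epsilon\, z_0 s_0 .$$
This step is pure bookkeeping once the expansion is in hand.

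To finish, I would assemble the two one-sided estimates. The upper bound is immediate: $z_0 s_0 \le z^\tr D s$ gives $z^\tr D r \le z^\tr D s + 56\epsilon z_0 s_0 \le (1+56\epsilon)z^\tr D s \le e^{56\epsilon}z^\tr D s$. The lower bound is the one delicate point, and where I expect the only real subtlety to lie: the symmetric argument would only give $z^\tr D r \ge (1-56\epsilon)z^\tr D s$, which is useless since $1-56\epsilon < e^{-56\epsilon}$ (and is even negative once $\epsilon>1/56$), the constant $56$ being genuinely tight here. The fix is to measure the difference against $z^\tr D r$ instead: from the hypothesis at index $0$, $r_0 \ge e^{-\epsilon}s_0$, so $z_0 s_0 \le e^\epsilon z_0 r_0 \le e^\epsilon z^\tr D r$, whence $z^\tr D s \le z^\tr D r + 56\epsilon z_0 s_0 \le (1+56\epsilon e^\epsilon)z^\tr D r$, and it remains only to check the elementary inequality $1 + 56\epsilon e^\epsilon \le e^{56\epsilon}$. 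This holds for all $\epsilon\ge 0$: both sides agree at $\epsilon=0$, and the derivative of the right-hand side dominates because $e^{56\epsilon}\ge e^\epsilon(1+\epsilon)$ is equivalent to $e^{55\epsilon}\ge 1+\epsilon$, which follows from $e^{55\epsilon}\ge 1+55\epsilon$. Combining the two bounds yields $e^{-56\epsilon}z^\tr D s \le z^\tr D r \le e^{56\epsilon}z^\tr D s$, as required.
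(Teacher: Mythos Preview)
Your proof is correct and follows essentially the same route as the paper: both use the identity $z^\tr D v = z_0 v_0 + \sum_{i=1}^3 (z_i-z_0)(v_i-v_0)$ to get the crucial lower bound $z^\tr D v \ge z_0 v_0$, and both bound the perturbation via $\sum_{i,j}|D_{i,j}|z_i v_j \le 28 z_0 v_0$ together with $|r_j-s_j|\le 2\epsilon\cdot(\text{either }s_j\text{ or }r_j)$.

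The one small difference is that the paper bounds $|s_j-r_j|$ against $r_j$ rather than $s_j$ from the outset, obtaining $z^\tr D s - z^\tr D r \le 56\epsilon\, z_0 r_0 \le 56\epsilon\, z^\tr D r$ directly; the inequality $z^\tr D s \le e^{56\epsilon} z^\tr D r$ then follows at once, and the other direction is obtained by swapping the roles of $s$ and~$r$. This sidesteps the detour through $1+56\epsilon e^\epsilon \le e^{56\epsilon}$ that you needed at the end, though your argument for that inequality is of course fine.
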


\begin{proof}
 
For any positive column vector~$v$ in~$\mathbb{R}^4$ satisying 
$1 \leq v_i/v_0$,
$$ z^\tr D v = z_0 v_0 
+(z_1-z_0)(v_1-v_0)
+(z_2-z_0)(v_2-v_0)
+(z_3-z_0)(v_3-v_0) \geq z_0 v_0.$$
Also, summing the absolute values of all of the monomials in $z^\tr D v$,
and using  $z_i/z_0\leq 2$ and $v_i/v_0\leq 2$,
we get $\sum_{i,j} |D_{i,j}| z_i v_j   \leq 28 z_0 v_0   $.
Then 

\begin{align*} z^\tr D s  - z^\tr D r  &= \sum_{i,j} D_{i,j} z_i (s_j-r_j) \\
                                             &\leq \sum_{i,j} |D_{i,j}| z_i (e^{\epsilon}-1) r_j \\
                                             &\leq 2 \epsilon \sum_{i,j} |D_{i,j}| z_i r_j\\
                                             & \leq 56 \epsilon z_0 r_0 \\
                                             & \leq 56 \epsilon z^\tr D r,\end{align*}
                                             
so $z^\tr D s \leq (1+56 \epsilon) z^\tr D r \leq e^{56 \epsilon} z^\tr D r$.                                           
The inequality
$z^\tr D r \leq   e^{56 \epsilon} z^\tr D s$ follows by interchanging the roles of~$r$ and~$s$ in the proof.                                           
\end{proof}

Our second lemma refers to the equations in Lemma~\ref{lem:simsig3}.

\begin{lemma}
\label{lem:XXX}
Suppose that $r_1 \leq r_2 \leq r_3$ satisfy the following equations.
\begin{align}
\label{E1} 2+ r_1 - r_2 - r_3 & > 0,\\
\label{E2} r_1 + r_2 + r_3 - r_2 r_3 -2 &\geq 0,\\
\label{E3} 1 \leq r_i \leq 2, \mbox{ for $i\in\{1,2,3\}$}.
\end{align}
Suppose that we are given values $\tilde s_1 \leq \tilde s_2 \leq \tilde s_3$
satisfying 
$ e^{-\chi  } r_i \leq \tilde s_i 
\leq e^{\chi  } r_i$ for $i\in\{1,2,3\}$, where $\chi$ is a sufficiently small positive constant.
Using $\tilde s_1$, $\tilde s_2$ and $\tilde s_3$,
we can compute values $s_1$, $s_2$ and $s_3$ satisfying $1\leq s_i\leq 2$,
$$  e^{-\newbigconst \chi  } s_i \leq  r_i \leq e^{\newbigconst \chi } s_i, $$
and Equations~(\ref{eq:sig1}), (\ref{eq:sig2}) and (\ref{eq:sig3}).

\end{lemma}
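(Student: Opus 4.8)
The plan is to pass to the shifted coordinates $a_i:=s_i-1$, $\bar a_i:=r_i-1$, $\tilde a_i:=\tilde s_i-1$, in which the target constraints simplify dramatically. Writing $s_i=1+a_i$, one checks that (\ref{eq:sig2}) becomes $a_1+a_2+a_3\ge 0$, hence is automatic once $a_i\ge0$; and that each inequality of (\ref{eq:sig1}), say $2+s_1-s_2-s_3>0$, becomes $a_2+a_3-a_1<1$, which is implied by the corresponding inequality $a_1\ge a_2a_3$ of (\ref{eq:sig3}), since $a_2+a_3-a_1\le a_2+a_3-a_2a_3=1-(1-a_2)(1-a_3)<1$ whenever $a_2,a_3<1$. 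Hence the set of $(s_1,s_2,s_3)$ with $1\le s_i\le2$ satisfying (\ref{eq:sig1})--(\ref{eq:sig3}) is exactly the image under $a\mapsto 1+a$ of
\[
\mathcal A=\bigl\{a\in[0,1)^3:\ a_i\ge a_ja_k\text{ for every }\{i,j,k\}=\{1,2,3\}\bigr\},
\]
and, likewise, the hypotheses (\ref{E1})--(\ref{E3}) on $r_1\le r_2\le r_3$ say exactly that $\bar a\in\mathcal A$ (here one uses that for sorted coordinates the binding inequality among (\ref{eq:sig3}) is the one with the product $r_2r_3$, which is precisely (\ref{E2}), and that (\ref{E1}) together with (\ref{E2}), (\ref{E3}) forces $r_i<2$). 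Finally, since all coordinates are $\ge1$, Observation~\ref{fact} turns an additive $O(\chi)$ perturbation of the $a$-coordinates into a multiplicative $e^{O(\chi)}$ perturbation of the $s$-coordinates, and as $\newbigconst=66$ is a very generous bound it therefore suffices to produce, from any point $\tilde a$ at additive ($\ell_\infty$) distance $O(\chi)$ from $\bar a\in\mathcal A$, a point $a\in\mathcal A$ at additive distance $O(\chi)$ from $\bar a$.

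First I would clip: replace $\tilde a$ by its coordinatewise projection $a'$ onto $[0,1-\beta]^3$ with $\beta=\Theta(\chi)$. Since $0\le\bar a_i<1$, clipping moves each coordinate no further from $\bar a_i$ than $\tilde a_i$ already was, so $a'$ stays within $O(\chi)$ of $\bar a$, and it is now bounded away from the sole ``bad'' boundary point $(1,1,1)$ of $\mathcal A$. Each constraint function $a_ja_k-a_i$ is $3$-Lipschitz on $[0,1]^3$, so $a'$ violates each of the three constraints by at most $V:=O(\chi)$, and it remains to repair these $O(\chi)$ violations with an $O(\chi)$ move. I would do this by a short case analysis on the sizes of the sorted coordinates $a'_1\le a'_2\le a'_3$. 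If all $a'_i=O(\chi)$, then all products are $O(\chi^2)$, the violations are $O(\chi^2)$, and raising each coordinate by $O(\chi)$ repairs everything with only second-order interference. If all $a'_i$ lie within $O(\chi)$ of $1$ (equivalently $s\approx(2,2,2)$), then \emph{lowering} all three coordinates by $\Theta(\chi)$ simultaneously \emph{increases} every one of the three slacks, since lowering each of $a_j,a_k$ by $d$ decreases $a_ja_k$ by about $2d$ while $a_i$ only decreases by $d$. In the remaining case at least one constraint has slack bounded below by an absolute constant, and one repairs the at most two genuinely endangered constraints — either by raising the deficient coordinate or by lowering the shared factor $a'_3$ — checking in each subcase that the required move is $O(\chi)$ and leaves the constant-order slack of the third constraint untouched; throughout, the move keeps $a\in[0,1)^3$, so $1\le s_i<2$.

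The hard part is this last case. The three constraints are non-convex and pull against one another (the coordinate $a_i$ sits on the ``small'' side of constraint $i$ but on the ``large'' side of constraints $j$ and $k$), so a move repairing one can re-open another. What makes it go through is that two constraints can be \emph{simultaneously} near-tight only in a very restricted configuration — it forces one coordinate to be near $1$, or two coordinates to be nearly equal with the third near $1$ — and in each such configuration there is a single $O(\chi)$ direction (lower the shared factor, or, near $(1,1,1)$, lower all coordinates) that pushes every near-tight constraint the right way while not disturbing the constraints that have constant-order slack; isolating these configurations and exhibiting the repair direction in each is the crux of the argument. Once the $O(\chi)$ bound on the total $a$-displacement is in hand, two applications of Observation~\ref{fact} give $e^{-\newbigconst\chi}s_i\le r_i\le e^{\newbigconst\chi}s_i$, while $1\le s_i<2$ is immediate, completing the proof.
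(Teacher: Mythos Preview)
Your coordinate shift $a_i=s_i-1$ and the identification of the feasible region with $\mathcal A=\{a\in[0,1)^3:a_i\ge a_ja_k\}$ are correct and genuinely clarifying. But the proof is not complete: your Case~3 is only an outline. You assert that away from the two corners some constraint has \emph{constant} slack, yet your Cases~1 and~2 are delimited by $O(\chi)$ thresholds, so ``not in Case~1 or~2'' only gives you that some coordinate exceeds $c\chi$ and some falls below $1-c\chi$, which does not by itself yield constant slack anywhere. More importantly, you never actually exhibit the repair move in Case~3: the three bilinear constraints pull against one another (raising $a_i$ helps constraint~$i$ but hurts $j$ and~$k$), and your final paragraph describes the difficulty rather than resolving it. A minor point: clipping at $1-\beta$ does not in general move $a'_i$ \emph{no further} from $\bar a_i$, since nothing prevents $\bar a_i>1-\beta$; it merely adds an extra $O(\chi)$, which is harmless but not what you wrote.

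The paper's argument is quite different and much shorter. It stays in the sorted setting $s_1\le s_2\le s_3$, so that of the three inequalities in each of (\ref{eq:sig1}) and (\ref{eq:sig3}) only one is binding, and splits into exactly two explicit cases on whether $\tilde s_2-\tilde s_1\le 5\delta$ (with $\delta=4e\chi$). If the two smallest are close, it \emph{forces} $s_1=s_2$ (and clips $s_3$ strictly below~$2$); in your coordinates this collapses the binding (\ref{eq:sig3}) inequality to $a_1(1-a_3)\ge0$, which is automatic, and (\ref{eq:sig1}) to $2-s_3>0$. If they are separated, it sets $s_1=\tilde s_1+4\delta$ and clips $s_2,s_3$ at~$2$; the $5\delta$ gap preserves the ordering, and a short monotonicity check in the factored form $s_1+s_2+s_3-s_2s_3-2=(s_1-1)-(s_2-1)(s_3-1)$ finishes (\ref{eq:sig3}). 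Two explicit formulas, two three-line verifications, no geometric repair. The trick you are missing is that collapsing $s_1=s_2$ kills the only dangerous constraint outright; your framework would recover this by observing that on the diagonal $a_1=a_2$ the set $\mathcal A$ degenerates to $\{a_1\ge a_1a_3\}$, but you never exploit it.
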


\begin{proof}
Let $\delta = 4 e \chi  $. Note that 
\begin{equation}
\label{startclose}
| \tilde s_i - r_i| \leq \delta.
\end{equation}
For example,
using Observation~\ref{fact},  
$r_i - \tilde s_i \leq 2 \chi  \tilde s_i \leq 2 \chi   e^{\chi  } r_i$.
Since $r_i \leq 2$  (by Equation (\ref{E3})) and $\chi \leq 1$,
$r_i - \tilde s_i \leq \delta$.  Similarly, 
$\tilde s_i - r_i \leq \delta$.

In each of  two cases, we will compute  
$s_1$, $s_2$, and $s_3$ so that

\begin{align}
\label{EE0} 1\leq s_1 \leq s_2 \leq s_3 \leq 2,\\
\label{EE1} 2+ s_1 - s_2 - s_3 & > 0,\\
\label{EE2} s_1 + s_2 + s_3 - s_2 s_3 -2 &\geq 0,\\
\label{finalclose} |s_i - r_i| \leq 6 \delta.\end{align}
Equations~(\ref{EE0}), (\ref{EE1}) and (\ref{EE2}) imply 
Equations~(\ref{eq:sig1}), (\ref{eq:sig2}) and (\ref{eq:sig3}).
Also, Equation~(\ref{finalclose})  
  implies $ e^{-\newbigconst \chi  } s_i \leq  r_i \leq e^{\newbigconst \chi } s_i$
since
$ s_i \leq r_i + 6 \delta  \leq r_i(1+6 \delta) \leq r_i e^{6\delta} \leq r_i e^{\newbigconst \chi}$
and similarly
$ r_i \leq   s_i e^{\newbigconst \chi}$.
 
 {\bf Case 1: }  $\tilde s_2 - \tilde s_1 \leq 5 \delta$.

Take $s_1 = s_2 = \min(\max(1,\tilde s_2),2-\delta)$ and $s_3 = \min(\max(1,\tilde s_3),2-\delta)$.
(\ref{EE0}) follows easily from the definitions since $2-\delta\geq 1$ and $\tilde s_2 \leq \tilde s_3$.
(\ref{EE1}) and (\ref{EE2}) follow from the facts that $s_1=s_2$, $s_1\geq 1$ and $s_3<2$ since, 
$s_1 + s_1 + s_3 - s_1 s_3 -2 = (s_1-1)(2-s_3)$.
To establish (\ref{finalclose}) 
note that
$$
s_1 \leq\max(1,\tilde s_2) \leq \max(1,\tilde s_1 + 5 \delta) \leq \max(1,r_1+6 \delta) \leq r_1 + 6\delta.
$$
Also, since $\tilde s_2 \leq r_2+\delta \leq 2+\delta$,
$$ s_1 \geq \tilde s_2 - 2\delta \geq \tilde s_1 - 2 \delta \geq r_1 - 3\delta.$$
Similarly, $r_2 - 3 \delta \leq s_2 \leq r_2 + \delta$ and $r_3 - 3 \delta \leq s_3 \leq r_3+\delta$.

{\bf Case 2: } $\tilde s_2 - \tilde s_1 > 5 \delta$.

 Take $s_1 = \tilde s_1 + 4 \delta$, 
$s_2 = \min(\tilde s_2,2)$ and 
$s_3 = \min(\tilde s_3,2)$.  
Equation (\ref{EE0}) follows  since $\tilde s_1 + 4 \delta \leq \tilde s_2 - \delta \leq s_2$.
(\ref{EE1}) follows
since
\begin{align*}
2+ s_1 - s_2 - s_3 &\geq 2 + \tilde s_1 + 4 \delta - \tilde s_2 - \tilde s_3\\
&\geq 2 +  (r_1-\delta) + 4 \delta -  (r_2+\delta) -  (r_3+\delta)\\
&\geq   \delta + 2+ r_1 - r_2 - r_3\\
&>  \delta.
\end{align*}

We now show that (\ref{EE2}) holds.
Note that $$s_1 + s_2 + s_3 - s_2 s_3 -2 = s_1 - (s_2-1)(s_3-1)-1,$$ and
the this quantity is increasing as a function of~$s_1$ and
decreasing as a function of~$s_2$ and as a function of~$s_3$.
Also, 
\begin{align*} s_1 &= \tilde s_1 + 4 \delta \geq r_1 + 3 \delta,\\
s_2 &= \min(\tilde s_2,2) \leq \tilde s_2 \leq r_2 + \delta, \mbox{ and }\\
s_3 &\leq r_3 + \delta.
\end{align*}
So 
\begin{align*} 
s_1 + s_2 + s_3 - s_2 s_3 -2 
&\geq (r_1 + 3 \delta) - (r_2+ \delta-1)(r_3 + \delta-1) -1
\\
&= 3 \delta - \delta(r_2-1)-\delta(r_3-1) - \delta^2 +
\left(r_1 - (r_2-1)(r_3-1)-1\right)
\\
 &\geq 3 \delta - \delta(r_2-1)-\delta(r_3-1) - \delta^2, 
 \end{align*}
 and this is at least~$0$, since $r_2-1\leq 1$, $r_3-1\leq 1$, and $\delta\leq 1$.
 
 To establish (\ref{finalclose}), note that
 $s_1 = \tilde s_1 + 4 \delta
 \leq r_1 + 5 \delta$ 
 and $r_1 \leq s_1$.
 Also,  as noted above, $\tilde s_2 - \delta \leq s_2 \leq \tilde s_2$
 and similarly $\tilde s_3 - \delta \leq s_3 \leq \tilde s_3$.
 
   \end{proof}

\end{document}